\title{Safety in $s$-$t$ Paths, Trails and Walks} 
\author{Massimo Cairo}{Department of Computer Science, University of Helsinki, Finland}{}{}{}
\author{Shahbaz Khan}{Department of Computer Science, University of Helsinki, Finland}{shahbaz.khan@helsinki.fi}{https://orcid.org/0000-0001-9352-0088}{}
\author{Romeo Rizzi}{Department of Computer Science, University of Verona, Italy}{romeo.rizzi@univr.it}{https://orcid.org/0000-0002-2387-0952}{}
\author{Sebastian Schmidt}{Department of Computer Science, University of Helsinki, Finland}{sebastian.schmidt@helsinki.fi}{https://orcid.org/0000-0003-4878-2809
}{}
\author{Alexandru~I.~Tomescu}{Department of Computer Science, University of Helsinki, Finland}{alexandru.tomescu@helsinki.fi}{https://orcid.org/0000-0002-5747-8350}{}
\authorrunning{M. Cairo, S. Khan, R. Rizzi, S. Schmidt and A. Tomescu} 
\keywords{Directed graph, connectivity problem, graph algorithm, strong bridge, strong articulation point, safety} 
\Crefname{subsection}{Subsection}{Subsections}
\crefname{subsection}{subsection}{subsections}
\newtheorem{problem}[theorem]{Problem}
\newcommand{\detour}{\textsc{Detour}}
\newcommand{\tail}{\textsc{tail}}
\newcommand{\head}{\textsc{head}}
\newcommand{\EPmodel}{\textsc{stPaths}}
\newcommand{\ETmodel}{\textsc{stTrails}}
\newcommand{\EWmodel}{\textsc{stWalks}}
\newcommand{\VPmodel}{\textsc{$V$-stPaths}}
\newcommand{\VTmodel}{\textsc{$V$-stTrails}}
\newcommand{\multiVTmodel}{\textsc{$V$-stMTrails}}
\newcommand{\VWmodel}{\textsc{$V$-stWalks}}
\newcommand{\XPmodel}{\textsc{$X$-stPaths}}
\newcommand{\XTmodel}{\textsc{$X$-stTrails}}
\newcommand{\XWmodel}{\textsc{$X$-stWalks}}
\newcommand{\EP}{\textsc{MaxSafe~\EPmodel{}}}
\newcommand{\ET}{\textsc{MaxSafe~\ETmodel{}}}
\newcommand{\EW}{\textsc{MaxSafe~\EWmodel{}}}
\newcommand{\XP}{\textsc{MaxSafe~\XPmodel{}}}
\newcommand{\XT}{\textsc{MaxSafe~\XTmodel{}}}
\newcommand{\XW}{\textsc{MaxSafe~\XWmodel{}}}
\newcommand{\bridge}{$s$-$t$ bridge}
\newcommand{\bridges}{$s$-$t$ bridges}
\newcommand{\articulationpoint}{$s$-$t$ articulation point}
\newcommand{\articulationpoints}{$s$-$t$ articulation points}
\newcommand{\bridgesequence}{bridge sequence}
\newcommand{\xbridgesequence}{$X$-bridge sequence}
\newcommand{\articulationsequence}{articulation sequence}
\newcommand{\island}{bridge component}
\newcommand{\islands}{bridge components}
\newcommand{\articulationislands}{articulation components}
\newcommand{\forbiddenwalk}{walk breaker}
\newcommand{\forbiddenwalks}{walk breakers}
\newcommand{\Forbiddenwalks}{Walk breakers}
\newcommand{\ForbiddenWalk}{Walk Breaker}
\newcommand{\ForbiddenWalks}{Walk Breakers}
\newcommand{\fforbiddenwalk}{$F$-walk breaker}
\newcommand{\xforbiddenwalk}{$X$-walk breaker}
\newcommand{\xforbiddenwalks}{$X$-walk breakers}
\newcommand{\forbiddentrail}{trail breaker}
\newcommand{\forbiddentrails}{trail breakers}
\newcommand{\Forbiddentrails}{Trail breakers}
\newcommand{\ForbiddenTrail}{Trail Breaker}
\newcommand{\forbiddenstructure}{breaking structure}
\begin{document}
	
	\maketitle
	
	\begin{abstract}
	Given a directed graph $G$ and a pair of nodes $s$ and $t$, an \emph{$s$-$t$ bridge} of $G$ is an edge whose removal breaks all $s$-$t$ paths of $G$ (and thus appears in all $s$-$t$ paths). Computing all $s$-$t$ bridges of $G$ is a basic graph problem, solvable in linear time.
	
	In this paper, we consider a natural generalisation of this problem, with the notion of ``safety'' from bioinformatics. We say that a walk $W$ is \emph{safe} with respect to a set $\mathcal{W}$ of $s$-$t$ walks, if $W$ is a subwalk of all walks in $\mathcal{W}$. We start by considering the maximal safe walks when $\mathcal{W}$ consists of: all $s$-$t$ paths, all $s$-$t$ trails, or all $s$-$t$ walks of $G$. We show that the first two problems are immediate linear-time generalisations of finding all $s$-$t$ bridges, while the third problem is more involved. In particular, we show that there exists a compact representation computable in linear time, that allows outputting all maximal safe walks in time linear in their length.
	
	We further generalise these problems, by assuming that safety is defined only with respect to a subset of \emph{visible} edges. Here we prove a dichotomy between the $s$-$t$ paths and $s$-$t$ trails cases, and the $s$-$t$ walks case: the former two are NP-hard, while the latter is solvable with the same complexity as when all edges are visible. We also show that the same complexity results hold for the analogous generalisations of \emph{$s$-$t$ articulation points} (nodes appearing in all $s$-$t$ paths). 
	
	We thus obtain the best possible results for natural ``safety''-generalisations of these two fundamental graph problems. Moreover, our algorithms are simple and do not employ any complex data structures, making them ideal for use in practice.
	\end{abstract}
	
	\newpage
	\section{Introduction}
	
	Connectivity and reachability are fundamental graph-theoretical problems studied extensively in the literature \cite{diestel10,HandbookGT,cormen01introduction,skiena}. 
	A key notion underlying such algorithms is that of edges (or nodes) critical for connectivity or reachability. The most basic variant of these are bridges (or articulation points), which are defined as follows. A \emph{bridge} of an undirected graph, also referred as \emph{cut edge}, is an edge whose removal increases the number of connected components. Similarly, a \emph{strong bridge} in a (directed) graph is an edge whose removal increases the number of strongly connected components of the graph. (Strong) articulation points are defined in an analogous manner by replacing edge with node.
	
	Special applications consider the notion of bridge to be parameterised by the nodes that become disconnected upon its removal~\cite{ItalianoLS12,Tarjan76}. Given a node $s$, we say that an edge is an \emph{$s$~bridge} (also referred as \emph{edge dominators} from source $s$~\cite{ItalianoLS12}) if there exists a node $t$ that is no longer reachable from $s$ when the edge is removed. Moreover, given both nodes $s$ and $t$, an \emph{$s$-$t$ bridge} is an edge whose removal makes $t$ no longer reachable from $s$. 
	
	From this point onward we assume a fixed (directed) graph $G$ without multiedges, with $n$ nodes and $m$ edges, and two given nodes $s$ and $t$ of $G$. Since $s$-$t$ bridges are exactly the edges (i.e., the paths of length one) appearing in all $s$-$t$ paths, it is natural to generalise this notion by considering the \emph{paths} (i.e., of length two or more) appearing in all $s$-$t$ paths. An equivalent way of defining this problem is through the notion of \emph{safety}~\cite{TomescuMedvedev,Tomescu2017}. Given a set of walks $\mathcal{W}$, we say that a walk $W$ is \emph{safe} with respect to $\mathcal{W}$ if $W$ is a subwalk of all walks in $\mathcal{W}$. Our problem is obtained by taking $\mathcal{W}$ to be the set of all $s$-$t$ paths.\footnote{We will focus on \emph{maximal} safe walks, namely those that cannot be extended left or right without losing safety.} We will also consider other natural generalisations for $\mathcal{W}$, e.g. all $s$-$t$ trails and all $s$-$t$ walks, as we will discuss in \Cref{sec:contribution}.
	
	\subparagraph*{Motivation.} Safety is motivated by real-world problems whose computational formulation admits multiple solutions. For this reason, we will also refer to the set $\mathcal{W}$ as the \emph{candidate set}. By looking at the parts common to all solutions---the safe parts---one can make more informed guesses on what can be correctly reported from the data. This approach is more feasible than e.g.,~the common approach of simply enumerating all solutions.
	
	A notable example is the genome assembly problem from Bioinformatics: one is given a set of short genomic fragments (the \emph{reads}) and one needs to reconstruct the genome from which these were sequenced (see e.g.~\cite{DBLP:books/cu/MBCT2015} for more details).
	A common approach is to build a graph from the reads, called the \emph{genome graph}, and then to define a genome assembly solution as a certain type of walk in that graph.
	Practical assemblers do not assemble full genomes, because the genome graph may admit a large number of solutions, but instead efficiently output only shorter strings that are guaranteed to appear in the genome.
	
	A natural notion of a genome assembly solution is that of a circular walk in the genome graph covering every edge or node at least once~\cite{TomescuMedvedev,Tomescu2017}.
	Finding all maximal safe walks for the edge-centric solution set can be solved optimally: an optimal quadratic-time algorithm was given in \cite{DBLP:journals/talg/CairoMART19}, and an optimal output-sensitive algorithm was given in~\cite{Cairo2020pre}.
	In this paper we drop both the circularity and the covering requirements from this solution set.
	This yields more basic graph problems with more fundamental solution sets that can potentially be computed more efficiently in practice.  
	In addition, we introduce a novel generalisation of the problems that makes a subset of nodes and edges \emph{invisible} in the solution set. This models scenarios where we want to ignore some uncertain or complex parts of the graph, but still report if the walks flanking this region always appear as consecutive in any solution. Moreover, keeping them in the graph still allows them to impact the safety of other walks.
	As such, our problems also have potential applications to the practical genome assembly problem (see also \Cref{s:conclusions}).
	
	\subparagraph*{Related work.} Safety has several precursors, the closest being \emph{persistence}: an individual node or edge is called \emph{persistent} if it appears in all solutions to a problem on the given graph. Persistent nodes and edges have been studied for maximum independent sets~\cite{doi:10.1137/0603052}, maximum bipartite matchings~\cite{Costa1994143}, assignments and transportations~\cite{DBLP:journals/mmor/Cechlarova98}. Other previous notions include \emph{$d$-transversals}~\cite{DBLP:journals/jco/CostaWP11} (sets of nodes or edges intersecting every solution to the problem in at least $d$ elements), \emph{$d$-blockers}~\cite{DBLP:journals/dm/ZenklusenRPWCB09} (sets of nodes or edges whose removal deteriorates the optimum solution to the problem by at least $d$), or \emph{most vital nodes or edges}~\cite{DBLP:journals/networks/BazganFNNS19}.

	For undirected graphs, the classical algorithm by Tarjan~\cite{Tarjan74} computes all bridges and articulation points in linear time. However, for directed graphs only recently Italiano et al.~\cite{ItalianoLS12} presented an algorithm to compute all strong bridges and strong articulation points in linear time. They also showed that classical algorithms~\cite{Tarjan76,GabowT85} compute $s$ bridges in linear time. The $s$ articulation points (or {\em dominators}) are extensively studied resulting in several linear-time algorithms~\cite{AlstrupHLT99,BuchsbaumKRW05,BuchsbaumGKRTW08}.
	The $s$-$t$ bridges were studied as minimum $s$-$t$ cuts in network flow graphs, where an $s$-$t$ bridge is a cut of {\em unit} size.
	These cuts can be discovered iteratively in the residual graph of the classic Ford Fulkerson algorithm~\cite{FordF56} after pushing unit flow into the network.
	Contracting the first cut to $s$, the next $s$-$t$ bridge can be discovered, and so on.
	Since only unit sized flows are of interest, 
	the algorithm completes in linear time. Recently, this algorithm was simplified for unit sized cuts ($s$-$t$ bridges) by Cairo et al.~\cite{CairoKRSTZ20}. 
	

    \subsection{Problems Studied}
    \label{sec:contribution}
    
    Apart from the candidate set made up of all \emph{$s$-$t$ paths} (mentioned in the previous section), we will also consider two basic generalisations of it: the set of all \emph{$s$-$t$ trails} (i.e., walks from $s$ to $t$ which can repeat nodes, but not edges), and the set of all \emph{$s$-$t$ walks} (i.e., walks from $s$ to $t$, which can repeat both nodes and edges).
    We denote the problems of computing the maximal safe walks (in terms of alternating sequences of nodes and edges) for each of these problems as \textsc{MaxSafe} followed by \EPmodel, \ETmodel, and \EWmodel, respectively.
    
    In \Cref{f:s-t-safety}, we present examples for these problems.
    Neither of the coloured cycles can be used by an $s$-$t$ path, therefore the whole thick blue line is safe in \EPmodel{}.
    In \ETmodel{}, nodes can be reused and hence the red cycle (defined later as \emph{\forbiddentrail{}}), makes only the thick red lines as safe.
    In \EWmodel{}, \bridges{} can be reused as well and hence the green cycle (defined later as \emph{\forbiddenwalk{}}), makes only the thick green lines as safe.
    \begin{figure}[tbh]
	    \centering
	\includegraphics[width=.8\textwidth]{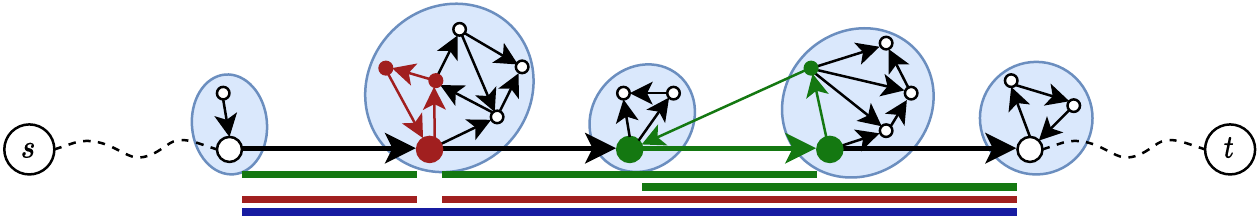}
	    \caption{Safe walks under different models for $s$-$t$-safety.
	    The figure shows a sequence of \bridges{} as bold arrows and their \islands{} as blue regions.
	    Thick blue, red and green lines show the answers to the \textsc{MaxSafe} \EPmodel{}, \ETmodel{} and \EWmodel{} problems, respectively.
	    \Forbiddentrails{} and \forbiddenwalks{} have been highlighted in red and green, respectively. 
	    }
	    \label{f:s-t-safety}
	\end{figure}
	
	An alternative way to look at these problems is to define safety in terms of only nodes, instead of walks.
	We define the \emph{node sequence} of a walk $W$ as the sequence (i.e., string) corresponding to $V$ obtained by reading the nodes of $W$ in order.
	A sequence of nodes is \emph{safe} if it is a substring of the node sequence of every walk in the candidate set.
	We denote the corresponding safety problems as \textsc{MaxSafe} followed by \VPmodel, \VTmodel, and \VWmodel, respectively.
	We alternatively refer to them as the \emph{$V$-visible} problems, whereas the natural case is referred as {\em $G$-visible}.
	The solutions to these $V$-visible problems can be obtained by simply leaving out the edges of maximal safe walks of $G$-visible problems.
	But when extending to multigraphs, the node-centric trails problem becomes more involved, since it allows the repetition of a node sequence in a trail if and only if the node sequence is connected by multiedges (see \Cref{s:extensions}).
	
	Another dimension for generalising the above models is to assume also a \emph{visible} subset $X \subseteq V \cup E$ of nodes and edges and to define safety while looking only at the sequence of visible nodes and edges.
	We denote these problems as \textsc{MaxSafe} followed by \textsc{$X$-stPaths}, \textsc{$X$-stTrails}, and \textsc{$X$-stWalks}, respectively (or \emph{$X$-visible} problems).
	For example, we say that a sequence of nodes and edges in $X$ is \emph{safe} for the \textsc{$X$-stWalks} problem if it is a substring of the \emph{$X$-subsequence} (subsequence of elements in $X$) of each $s$-$t$ walk.
	
	\begin{table}[tbh]
		\caption{Computational complexity of the problems studied in this paper. $len(S)$ denotes the total length of the solution. (*) denotes the complexity for graphs without {\em multi-edges}, for {\em multigraphs} the complexity of \textsc{MaxSafe} \textsc{$V$-stPaths} is  $O(m+n+len(S))$.}
		\label{t:overview}
		\centering
		\begin{tabular}{|l|c c c|}
			\hline Visibility & \EP & \ET & \EW \\
			\hline 		
			$G$ & $O(m+n)$ & $O(m+n)$ & $O(m+n+len(S))$ \\
			$V$ & $O(m+n)$ & $O(m+n)$* & $O(m+n+len(S))$ \\
			$X$ & NP-hard & NP-hard & $O(m+n+len(S))$ \\ \hline
		\end{tabular}
	\end{table}
	In this paper we characterise the complexity of all nine \textsc{MaxSafe} problems for graphs, and later extend our results to multigraphs.
	See \Cref{t:overview} for a summary of these results.
	The $V$-visible problems are an interesting special case of the $X$-visible problems, as they are solvable in linear time even though they restrict visibility.
	This is a useful observation, as genome assembly problems are often modelled with a node-centric graph~\cite{TomescuMedvedev,Tomescu2017}.
	
	\subsection{Overview of our Approach}
	
	We solve all the linearly solvable \textsc{MaxSafe} problems with a similar algorithmic approach.
	Observe that a non-empty walk is uniquely defined by a sequence of edges.
	Therefore, we can simplify the $G$-visible and the $V$-visible problems to separately computing the maximal safe \emph{edge sequences} (analogue to \emph{node sequences}) and the maximal safe \emph{empty} walks (i.e. that consist of a single node).
	To obtain the solutions of the $G$-visible problems, it then suffices to complete the edge sequences with their corresponding nodes.
	And to obtain the solutions of the $V$-visible problems, observe that a sequence of nodes is safe if and only if it spells out the nodes of a safe walk.
	Therefore we take the solution of the corresponding $G$-visible problem and remove the edges.
	This separation has the advantage that for the more complex graph structures that govern the safety of non-empty walks we only need to consider edges, and adding back the nodes in the end is trivial.
	Moreover, if we are only interested in safe sequences of edges, we simply skip adding the nodes.
	Using a simple graph transformation, the \XW{} can also be solved by considering only edges.
	
	Observe that for a sequence of edges to be safe in our models, each edge needs to be safe on its own.
	Therefore, a safe sequence can only contain visible \bridges{}, which we compute as the first step.
	This \emph{bridge sequence} acts as the core of our solution, in the way that we can always describe the solution as a set of substrings of the bridge sequence, such that each \bridge{} is part of at least one maximal safe sequence.
	The bridge sequence (and similarly the \articulationsequence{}) can be computed with the classical min-cut algorithm~\cite{FordF56}.
	This algorithm was recently simplified for \bridges{} (or \articulationpoints{}) by Cairo et al.~\cite{CairoKRSTZ20}.
	
	
	The second step of our algorithms is to compute certain \emph{\forbiddenstructure{}s} (as shown in \Cref{f:s-t-safety}) that determine which substrings of the bridge sequence are maximal safe.
	For \EPmodel{} and \ETmodel{} and their $V$-visible counterparts, the \forbiddenstructure{}s do not cause solutions to overlap.
	Since the length of the bridge sequence is in $O(n)$ and the \forbiddenstructure{}s that define non-overlapping solutions are simple, we obtain the following result.
    
    \begin{theorem}
    \label{t:linearcases}
        Given a graph $G := (V, E)$ with $n$ nodes, $m$ edges and $s,t\in V$, there exist algorithms to compute \textsc{MaxSafe} $G$-visible (or $V$-visible) \EPmodel{} and \ETmodel{} in $O(m+n)$ time.
    \end{theorem}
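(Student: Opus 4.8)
The plan is to follow the decomposition outlined in \Cref{sec:contribution}: reduce each of the four problems to computing (i) the maximal safe \emph{edge sequences} and (ii) the maximal safe \emph{empty walks} (single nodes), and then reassemble. For the $G$-visible variants I would complete each safe edge sequence with its uniquely determined interior nodes and keep the safe single nodes as length-zero walks; for the $V$-visible variants I would read off the node sequence of each safe walk and drop the edges. As both reassembly steps run in time linear in the output, which will be $O(n)$, it suffices to produce the maximal safe edge sequences and the maximal safe single nodes in $O(m+n)$ time.

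First I would compute the \bridgesequence{} $b_1,\dots,b_k$, the \bridges{} in the order in which they appear in every $s$-$t$ path. Since each edge of a safe walk must lie on all $s$-$t$ paths, every safe edge is a \bridge{}, so every maximal safe edge sequence is a contiguous block of $b_1,\dots,b_k$; as a simple $s$-$t$ path has at most $n-1$ edges, $k=O(n)$. The \bridgesequence{} (and, for the single-node part, the analogous \articulationsequence{}) is obtained in linear time from the unit-cut specialisation of the min-cut algorithm by Cairo et al.~\cite{CairoKRSTZ20}. It then remains to decide, at each boundary between consecutive \bridges{}, whether the block continues or breaks.

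For \EPmodel{} the rule is immediate. Writing $b_i=(u_i,v_i)$, two consecutive \bridges{} stay in a common safe walk exactly when $v_i=u_{i+1}$, i.e.\ the \island{} between them is trivial: a non-trivial \island{} contains no forced edge (a forced edge inside it would be a further \bridge{}, contradicting the definition of \island{}), and a simple path can use no cycle, so no alternative forces such an edge. Hence the maximal safe edge sequences are the maximal runs of chained \bridges{}, and the forced nodes lying strictly inside non-trivial \islands{} are precisely the \articulationpoints{} incident to no \bridge{}, read off the \articulationsequence{}. For \ETmodel{} I would additionally split a run at every node $w$ admitting a detour: since an $s$-$t$ trail already visits $w$ and may revisit it, it can splice in any closed walk at $w$ that avoids all \bridges{} (each \bridge{} is already used once along the backbone). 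Such a \forbiddentrail{} exists precisely when $w$ lies on a cycle of $G$ with its \bridges{} deleted, which is detected in linear time from the strongly connected components of that graph; its presence makes $b_i\,w\,b_{i+1}$ unsafe and its absence leaves the run intact (see \Cref{f:s-t-safety}).

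Because distinct solutions are distinct blocks of the \bridgesequence{}, they share no edge, so the total output length is $O(n)$ and every reassembly and breaking test fits the $O(m+n)$ budget. The main obstacle is the correctness of the \ETmodel{} rule: I must show that the \forbiddentrail{} condition is both necessary and sufficient, i.e.\ that whenever $w$ sits on a bridge-free cycle the detour can actually be realised inside a genuine $s$-$t$ trail (so that the remaining \islands{} can still be traversed with the detour edges removed), and that no other phenomenon can break a chained run. A secondary point to verify is the faithfulness of the $V$-visible reduction, namely that a node sequence is safe exactly when it spells the nodes of a safe walk, which is what makes the single-node \articulationsequence{} bookkeeping necessary rather than just the edge sequences.
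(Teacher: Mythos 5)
Your proposal is correct and follows essentially the same route as the paper: compute the \bridgesequence{} (and \articulationsequence{}) in linear time via \cite{CairoKRSTZ20}, split it at non-adjacent consecutive \bridges{} for \EPmodel{}, additionally split where a \forbiddentrail{} exists for \ETmodel{}, and obtain the $V$-visible answers and safe single nodes by dropping edges and reading off \articulationpoints{} not incident to \bridges{}. The only (harmless) deviation is your SCC-based test for a bridge-free cycle through the shared node in $G$ with the \bridges{} deleted, where the paper instead checks for an incoming edge to that node from inside its \island{}; both tests are linear and equivalent, since a \forbiddentrail{} necessarily lies wholly inside a single \island{} (so the spliced trail reuses no edge), which is exactly the necessity/sufficiency verification you flagged as the main obstacle and which the paper supplies in \Cref{t:charesw}.
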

    
    When extending to \EWmodel{}, we get more complex \forbiddenstructure{}s that both overlap themselves and cause the solutions to overlap.
    This poses two problems.
    First, there can be up to $O(n^2)$ of these \forbiddenstructure{}s (see \Cref{f:n2forbiddenwalks}), and the total length of the solution can be up to $O(n^2)$ (see \Cref{f:n2safe}).
    To handle the high amount of \forbiddenstructure{}s, we show that they can be reduced to a dominating set of size $O(n)$, which can be computed in  $O(m + n)$ time without computing all \forbiddenstructure{}s first.
    To handle the solution length, we make use of the bridge sequence $B$.
    We show that there are at most $O(|B|)$ maximal safe sequences, which allows us to represent the whole solution in a compact representation which can be unpacked in \emph{output-sensitive} linear time(time linear in output size).
    This representation consists of the bridge sequence and the start and end indices of each maximal safe sequence.
    Its total size is $O(|B|)$, and since each bridge is safe it never exceeds the total length of the solution.
    We show that this data structure can be computed in $O(m + n)$ time.
    
    \begin{theorem}
        \label{t:lineardatastructure}
        Given a graph $G := (V, E)$ with $n$ nodes, $m$ edges and $s,t\in V$, the corresponding bridge sequence $B$ and a subset $X \subseteq V \cup E$, there exist algorithms to compute a compact representation of the solution $S$ of \textsc{MaxSafe} $X$-visible \EWmodel{} of size $O(|B|)$ in $O(m+n)$ time, which can report the complete solution in $O(len(S))$ time.
    \end{theorem}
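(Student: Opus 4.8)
The plan is to reduce \XW{} to a purely edge-based problem on the given \bridgesequence{} $B$, and then to control two separate sources of blow-up: the number of \forbiddenwalks{} and the length of the output. First I would apply the simple graph transformation alluded to in the overview, which turns the $X$-visibility requirement into ordinary edge visibility; after this it suffices to compute the maximal safe \emph{edge sequences}, since a safe node-edge sequence is determined by its edges and the suppressed nodes can be filled back in at reporting time. Because each edge of a safe sequence must be safe in isolation, and a safe singleton edge is exactly a visible \bridge{}, every maximal safe edge sequence is a contiguous substring of $B$. Hence the problem reduces to identifying the maximal intervals $[i,j]$ of $B$ that spell safe walks, and since $B$ is given this costs nothing to set up.

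The heart of the argument is to characterise, and then efficiently detect, which intervals are cut. I would show that an interval of $B$ is unsafe precisely when some \forbiddenwalk{}---a cycle that lets an $s$-$t$ walk reuse a \bridge{}---permits inserting a detour strictly between two \bridges{} that are adjacent in that interval, destroying their forced consecutiveness. The difficulty, which I expect to be the main obstacle, is that there may be $\Theta(n^2)$ such \forbiddenwalks{} (\Cref{f:n2forbiddenwalks}). I would overcome this by proving that they possess a \emph{dominating} subfamily of size $O(n)$: for each position of $B$ it is enough to record the farthest \bridge{} that a detour originating there can reach backwards, and the key lemma is that this linear subfamily cuts exactly the same intervals as the full family, so no safe interval is wrongly reported. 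Crucially, this dominating information can be gathered in $O(m+n)$ time by a single reachability sweep over the \islands{}, \emph{without} ever enumerating the quadratically many \forbiddenwalks{}.

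Given this per-position cutting information, a left-to-right sweep over $B$ recovers all maximal safe intervals in $O(m+n)$ time. I would further argue that there are at most $O(|B|)$ of them: being maximal, no two can share a left endpoint (the longer would contain the shorter), so their number is bounded by $|B|$, even though the intervals may genuinely overlap. The compact representation is then the pair consisting of $B$ together with the list $\{(i_\ell,j_\ell)\}$ of start and end indices of the maximal safe intervals; its size is $O(|B|)$, and since every \bridge{} is itself safe it never exceeds $len(S)$. This is what tames the unbounded output length observed in \Cref{f:n2safe}.

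Finally, to report $S$ in $O(len(S))$ time, I would, for each stored interval, traverse $B$ from $b_{i_\ell}$ to $b_{j_\ell}$, splicing in the forced node-edge path inside each intervening \island{} and reinserting the nodes dropped by the edge-only reduction. As this emits exactly the characters of the solution and nothing more, the unpacking is output-sensitive, giving the claimed overall bound $O(m+n+len(S))$ in line with \Cref{t:linearcases} and \Cref{t:overview}.
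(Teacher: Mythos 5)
Your overall architecture coincides with the paper's: node expansion to reduce to $X \subseteq E$, restriction to the \xbridgesequence{}, unsafety characterised by \forbiddenwalks{} (\Cref{t:charfw}), an $O(|B|)$ dominating subfamily computed via per-\island{} reverse-reachability marking and backward edges, at most $O(|B|)$ maximal intervals, and a compact representation consisting of $B$ plus index pairs. However, your central selection rule --- which you yourself call the key lemma --- is inverted, and as stated it fails. Domination goes the other way: a \forbiddenwalk{} whose bridge sequence is a \emph{substring} of another's dominates it, because a breaker with bridge sequence $(b_i,\dots,b_j)$ breaks exactly the substrings $(b_a,\dots,b_b)$ with $a<i$ and $b>j$; the shorter the span, the more it breaks. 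So recording, for each origin, ``the farthest bridge that a detour originating there can reach backwards'' keeps precisely the dominated (longest-span) breakers and discards the dominating (shortest-span) ones. Concretely, if $C_5$ has backward edges with marked heads into both $C_1$ and $C_3$, your rule retains the breaker spanning $(b_1,\dots,b_4)$ and drops the one spanning $(b_3,b_4)$; but the former breaks nothing the latter does not, while the latter breaks e.g.\ $(b_2,\dots,b_5)$, which your algorithm would then wrongly certify as safe. The paper does the opposite: for each start bridge $b_i$ it records the \emph{nearest} end ($end[i]\gets\min$, via \Cref{l:backwardsedgesareforbiddenwalks} and stage one of \Cref{a:forbiddenwalks}), and it then still needs a second, reverse sweep (stage two) to discard breakers dominated by later-starting ones across different start positions --- a filtering step your sketch omits entirely, yet without which the interval pairing of \Cref{l:maxforbiddenwalkstructure} (start of $P_i$ paired with end of $P_{i+1}$) does not produce the maximal safe intervals.

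A second, smaller flaw is in your unpacking step. A solution to the $X$-visible problem is a sequence of \emph{visible} elements only, so nothing is spliced in between consecutive visible \bridges{}: the invisible island paths are by definition not part of the output, and your ``forced node-edge path inside each intervening \island{}'' in general does not even exist as a unique object (in the $G$-visible specialisation, consecutive \bridges{} of a safe substring are necessarily adjacent, so there is no intervening path at all). Reporting is simply copying, for each stored index pair, the corresponding substring of the \xbridgesequence{}, undoing the node expansion by replacing visible internal edges with their original nodes. Finally, note that the zero-bridge-length breakers (the \forbiddentrails{}) must be integrated with the paper's reversed start/end convention ($b_{i+1}$ is the start and $b_i$ the end) for the interval pairing to remain uniform; your proposal is silent on how these cuts enter the sweep.
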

	
	In contrast to this, when considering {\em subset visibility} for \EPmodel{} and \ETmodel{}, the respective safety problems are NP-hard.
	We prove that by reducing from the \detour{} problem of finding a $u$-$v$ path passing through a third given node $w$  (see \Cref{p:detour}) which is known to be NP-hard.
	The reduction is possible, since the problems forbid edge repetitions and with subset visibility we can focus on the nodes of the \detour{} instance.
    
    \begin{theorem}[restate = nphardness, name = ]
	    \label{t:np-hardness}
    	The \textsc{MaxSafe} $X$-visible \EPmodel{} and \ETmodel{} problems are NP-hard, even when deciding the safety of a sequence of just two elements of $X$ and restricting $X$ to contain only nodes or only edges.
	\end{theorem}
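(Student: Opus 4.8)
\subparagraph*{Proof idea.}
The plan is to reduce from the \detour{} problem (\Cref{p:detour}): given a directed graph $H$ with three designated nodes $u,v,w$, decide whether $H$ contains a simple $u$-$v$ path through $w$. From such an instance I will build an $X$-visible instance in which the safety of a carefully chosen two-element sequence encodes exactly the non-existence of a detour, so that a polynomial algorithm for the safety decision would solve \detour{}. The key point to keep in mind throughout is that safety is a property quantified over \emph{all} $s$-$t$ walks, so a detour will correspond to the \emph{negation} of safety (a single bad witness walk).

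The core construction: take $H$, add a fresh source $s$ with unique out-edge $s\to u$, and a fresh sink $t$ with unique in-edge $v\to t$. Then every $s$-$t$ path has the form $s,u,\dots,v,t$, whose middle part is a simple $u$-$v$ path of $H$, and conversely; since $s,t$ are degree-one appendages, no other nodes intervene. Now make visible exactly the nodes of the \detour{} instance, $X=\{u,v,w\}$. As a simple $u$-$v$ path starts at $u$ and ends at $v$, the $X$-subsequence of the corresponding $s$-$t$ path is $(u,v)$ when the path avoids $w$ and $(u,w,v)$ when it passes through $w$. I test the two-element sequence $(u,v)$: it is a substring of $(u,v)$ but not of $(u,w,v)$. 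Hence $(u,v)$ is safe if and only if no $u$-$v$ path uses $w$, i.e. if and only if the \detour{} instance is a no-instance. Thus a polynomial test for the safety of $(u,v)$ in \XP{} would decide \detour{}, giving NP-hardness; here $X$ contains only nodes and the tested sequence has only two elements, as required.

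To obtain the \XT{} (trails) case and the edge-only visibility case simultaneously, I preprocess $H$ with the standard node-splitting gadget: replace each node $x$ by $x^-,x^+$ joined by a single edge $x^-\to x^+$, redirecting in-edges to $x^-$ and out-edges out of $x^+$, and attach the source and sink via $s\to u^-$ and $v^+\to t$. Since a trail may not repeat an edge, no $s$-$t$ trail can traverse a bottleneck edge $x^-\to x^+$ twice, so every $s$-$t$ trail visits each original node at most once; consequently $s$-$t$ trails correspond exactly to simple $u$-$v$ paths of $H$, and ``a $u$-$v$ trail through $w$'' again coincides with the \detour{} condition. Making visible one copy of each of $u,v,w$ yields the node-only version for \XT{}; alternatively, setting $X=\{(u^-,u^+),(v^-,v^+),(w^-,w^+)\}$ and testing the pair of the first two edges yields the edge-only version for both \XP{} and \XT{}, since each such edge is used precisely when the corresponding node is visited. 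This covers all four combinations (paths or trails, nodes only or edges only).

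The main obstacle I anticipate is the \XT{} direction. Unlike for paths, trails may revisit nodes, so without care ``some $u$-$v$ trail through $w$'' could be strictly weaker than the simple-path condition of \detour{}, breaking the equivalence. The node-splitting gadget is exactly what rules this out, by forcing every trail to be simple through its single bottleneck edge; verifying this forcing argument, and checking that the degree-one attachments of $s$ and $t$ really do impose the claimed first/last structure on $u$ and $v$, is the crux of the proof. The only other point requiring care is the direction of the reduction already flagged above: because safety is universally quantified over $s$-$t$ walks, \detour{} maps to \emph{unsafety}, which is why a polynomial-time algorithm for the \textsc{MaxSafe} problem---one that would let us read off whether $(u,v)$ is safe---would place \detour{} in P.
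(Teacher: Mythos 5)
Your proposal is correct and takes essentially the same route as the paper: a reduction from \detour{}, using the node-splitting gadget (the paper's \emph{node expansion}) so that every $s$-$t$ trail is forced to be an $s$-$t$ path and both the \EPmodel{} and \ETmodel{} cases are settled by one construction, with exactly three visible elements (the nodes, or the internal edges, corresponding to $u$, $w$, $v$) and a two-element query that is safe if and only if the \detour{} instance is a no-instance. The only differences are cosmetic: you attach fresh degree-one source and sink nodes $s$ and $t$, whereas the paper designates $s$ and $t$ as endpoints of the internal edges $e_u$ and $e_v$ inside the expanded graph, and you treat the unsplit paths/node-visibility case separately before giving the general gadget.
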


    \subparagraph{Organisation of the paper.} We describe the results in an incremental manner, gradually building upon the previous solution to solve harder problems. In the remaining subsections of this section we define our notation and describe some preliminary results including the \bridge{} algorithm.
    In \Cref{s:safetyEP} we describe how the \bridge{} algorithm can be expanded to solve \EP{} and \ET{}.
    In \Cref{s:min-forb-walks} we describe another algorithm that can be used on top of the \bridge{} algorithm to solve \EW{}.
    As stated above, solving these problems also solves all $V$-visible problems.
    In \Cref{s:subset-visibility} we describe how that algorithm can be expanded to solve \XW{}.
    In \Cref{s:extensions} we describe how to extend our results to multigraphs.
    In \Cref{s:conclusions} we review our results.

	
	\subsection{Notation}
	
	As defined above, we assume a fixed graph $G := (V, E)$, where $V$ is a set of $n$ nodes and $E$ a set of $m$ edges.
	Furthermore, we assume two nodes $s, t \in V$ are given.
	A \emph{graph} is directed and may include loops, but not multiedges.
	A graph with multiedges is a \emph{multigraph}.
	Given a set of nodes and edges $X \subseteq V \cup E$, or a single node or edge $X$, then $G[X]$ denotes its induced subgraph and $G - X$ is the result of removing from $X$ all edges and nodes (together with their incident edges).
	If $X$ contains only edges, we may also write $G \setminus X$.
	

	Given an edge $e=(u,v)$, $\head{}(e)=v$ denotes its \emph{head} and $\tail{}(e)=u$ denotes its \emph{tail}.
	Given a sequence, a \emph{subsequence} is obtained by removing arbitrary elements, while a \emph{substring} is obtained by removing a prefix and a suffix (both possibly empty).
	A sequence $W := (v_1, e_1, v_2, \dots, v_{|W|}, e_{|W|}, v_{|W|+1})$ of nodes $v_i$ and edges $e_i$ is a \emph{$v_1$-$v_{|W|+1}$ walk} (or simply \emph{walk}) if $v_i = \tail{}(e_i)$ and $v_{i+1} = \head{}(e_i)$ for all $i \in \{1, \dots, |W|\}$.
	Its subsequence of only elements from a set $X$ is called its \emph{$X$-subsequence}; if $X = V$ it is called its \emph{node sequence}, and if $X = E$ it is called its \emph{edge sequence}. 
	A walk $W$ is a \emph{$v_1$-$v_{|W|+1}$ trail} (or simply \emph{trail}) if it repeats no edge, and it is a \emph{$v_1$-$v_{|W|+1}$ path} (or simply \emph{path}) if it additionally repeats no node, except that $v_1$ may equal $v_{|W|+1}$, in which case it is a \emph{cycle}.
	A walk $W$ is \emph{empty} if it contains only one node and \emph{non-empty} otherwise.
	A walk $W$ \emph{contains} a sequence of edges (or nodes), if that sequence is a substring of its edge sequence (or node sequence).
	
	The \emph{node expansion} of a node $v$ is an operation that transforms $G$ into a graph $G'$ by adding a node $v'$ and an edge $e_v$ from $v$ to $v'$ and moving all out-edges from $v$ to $v'$.
	We call $e_v$ the \emph{internal edge} of $v$.
	
	\subsection{Preliminaries}
	\newcommand{\BrgS}{B}
    \newcommand{\brgS}{b}
	\newcommand{\ComB}{{\cal C}}
	\newcommand{\comB}{C}
	
	Let $\BrgS=\{\brgS_1,\brgS_2,...,\brgS_{|\BrgS|}\}$ be the set of $s$-$t$ bridges of $G$. By definition, for all $\brgS_i\in \BrgS$ there exists no path from $s$ to $t$ in $G\setminus \brgS_i$ (see \Cref{fig:bridgeCSa}), and all \bridges{} in $\BrgS$ appear on every $s$-$t$ path in $G$.
	Further, the \bridges{} in $\BrgS$ demonstrate the following interesting property.
	
	\begin{lemma}       
	The \bridges{} in $\BrgS$ are visited in the same order by every $s$-$t$ path in $G$.
	\label{lem:bridgeOrder}
	\end{lemma}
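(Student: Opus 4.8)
The plan is to argue by contradiction with an exchange argument: if two $s$-$t$ paths disagreed on the relative order of two bridges, I would splice them together to produce an $s$-$t$ path avoiding one of the two bridges, contradicting that it is a bridge. First I would record the only structural fact I need: since every $b_i \in B$ is an $s$-$t$ bridge it lies on every $s$-$t$ path, and since a path repeats no node (hence no edge) each $b_i$ occurs exactly once on any given $s$-$t$ path. This lets me speak unambiguously of the order in which a path traverses the bridges of $B$.

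Suppose, for contradiction, that two $s$-$t$ paths $P$ and $Q$ traverse two distinct bridges $b_i = (x_i, y_i)$ and $b_j = (x_j, y_j)$ in opposite relative orders: $P$ uses $b_i$ before $b_j$, while $Q$ uses $b_j$ before $b_i$. I would then decompose $P = P_1 \cdot b_i \cdot P_2 \cdot b_j \cdot P_3$, where $P_3$ is the suffix of $P$ running from $\head{}(b_j) = y_j$ to $t$, and $Q = Q_1 \cdot b_j \cdot Q_2 \cdot b_i \cdot Q_3$, where $Q_1$ is the prefix of $Q$ running from $s$ to $\tail{}(b_j) = x_j$.

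The key step is the splice $W := Q_1 \cdot b_j \cdot P_3$. This is a valid $s$-$t$ walk, since $Q_1$ ends at $x_j$, the edge $b_j$ goes from $x_j$ to $y_j$, and $P_3$ begins at $y_j$. Crucially, $W$ avoids $b_i$: in $Q$ the edge $b_i$ occurs only after $b_j$, so it is absent from the prefix $Q_1$; and in $P$ the edge $b_i$ occurs only before $b_j$, so it is absent from the suffix $P_3$; and $b_i \neq b_j$. Hence $W$ is an $s$-$t$ walk in $G \setminus b_i$. Since every $s$-$t$ walk contains an $s$-$t$ path (obtained by repeatedly deleting a closed subwalk), $G \setminus b_i$ contains an $s$-$t$ path, contradicting the assumption that $b_i$ is an $s$-$t$ bridge. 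Therefore no such $P, Q$ exist, and all $s$-$t$ paths traverse the bridges of $B$ in the same order.

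The main obstacle, and really the only subtlety, is verifying that the splice simultaneously forms a legitimate walk at the junction $b_j$ and genuinely omits $b_i$; both facts rest on the single-occurrence property combined with the two paths' opposite orderings, which is precisely what forces $b_i$ to sit entirely inside the discarded portions of $P$ and $Q$. The concluding extraction of an $s$-$t$ path from the walk $W$ is routine, but I would state it explicitly so that the contradiction is against the definition of an $s$-$t$ bridge (no $s$-$t$ \emph{path} in $G \setminus b_i$) rather than merely against reachability.
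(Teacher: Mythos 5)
Your proof is correct: the splice $W := Q_1 \cdot b_j \cdot P_3$ is well-formed at the junction, the single-occurrence property of edges on a path does force $b_i$ into the discarded portions of both $P$ and $Q$, and extracting an $s$-$t$ path from the walk $W$ contradicts the definition of an $s$-$t$ bridge. The paper organises the argument differently: it fixes one bridge $b_i$, passes to $G \setminus b_i$, and categorises every other bridge $b_j$ by whether it is reachable from $s$ or can reach $t$ there, showing exactly one of the two holds --- both together would concatenate into an $s$-$t$ connection avoiding $b_i$ (the same contradiction you use), and at least one holds by cutting an arbitrary $s$-$t$ path at $b_i$ --- so that category membership pins down the position of $b_j$ relative to $b_i$ on every path. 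The core step is thus identical in both proofs (an $s$-to-$b_j$ segment avoiding $b_i$ joined to a $b_j$-to-$t$ segment avoiding $b_i$ yields an $s$-$t$ path in $G \setminus b_i$), but the packaging buys different things: your direct two-path exchange is leaner and makes explicit the walk-to-path extraction that the paper leaves implicit (its two reachability witnesses may share nodes, so their concatenation is a priori only a walk), whereas the paper's reachability dichotomy additionally delivers the characterisation that the bridges visited before $b_i$ are exactly those reachable from $s$ in $G \setminus b_i$, which is precisely what it reuses immediately afterwards to define the bridge components $C_i$ and to support the linear-time computation. One small point: your final sentence tacitly uses that pairwise agreement of the total orders induced by each path (well-defined thanks to single occurrence) forces the orders to coincide; this is immediate, but worth one explicit line.
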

    \begin{proof}
    It is sufficient to prove that for any $\brgS_i\in  \BrgS$, all $b_j\in \BrgS$ (where $j\neq i$), can be categorised into those which are always visited before $\brgS_i$ and those that are always visited after $\brgS_i$ irrespective of the $s$-$t$ path chosen in $G$. Consider the graph $G\setminus \brgS_i$, observe that every such $\brgS_j$ is either reachable from $s$, or can reach $t$. It cannot fall in both categories as it would result in an $s$-$t$ path in $G\setminus \brgS_i$, which violates $\brgS_i$ being an \bridge{}. Further, it has to be in at least one category by considering any $s$-$t$ path of $G$, where $\brgS_i$ appears either between $s$ and $\brgS_j$ or between $\brgS_j$ and $t$. Hence, those reachable from $s$ in $G\setminus \brgS_i$ are always visited before $\brgS_i$, and those able to reach $t$ in $G\setminus \brgS_i$ are always visited after $\brgS_i$, irrespective of the $s$-$t$ path chosen in $G$.  
	\end{proof}

\begin{figure}[tbh!]
\centering
\begin{subfigure}[t]{0.5\textwidth}
\centering
\includegraphics[width=\textwidth]{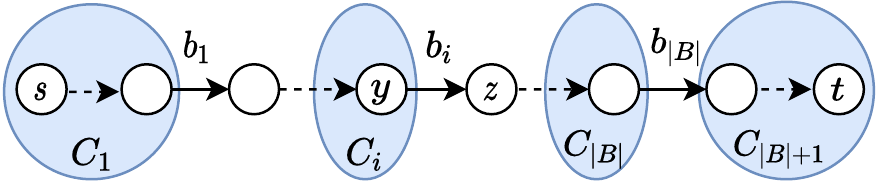}
\caption{Bridge sequence $\BrgS = \{\brgS_1,\dots,\brgS_{|B|}\}$ and corresponding bridge components $\ComB=\{\comB_1,\dots,\comB_{|B|+1}\}$.}
\label{fig:bridgeCSa}
\end{subfigure}\hfill
\begin{subfigure}[t]{0.45\textwidth}
\centering
\includegraphics[width=.8\textwidth]{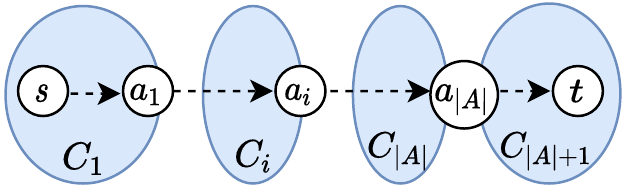}
\caption {Articulation sequence $A =\{a_1,\dots,a_{|A|}\}$ and its components $\ComB=\{\comB_1,\dots,\comB_{|A|+1}\}$ .}
\label{fig:bridgeCSb}
\end{subfigure}\hfill
\caption{Bridge and articulation sequences with their components along $s$-$t$ path. Recall that $\comB_i$ is the new part of $G$ reachable from $s$ in $G\setminus \brgS_i$ (or $G- a_i$) in comparison to $G\setminus \brgS_{i-1}$ (or $G- a_{i-1}$).}
\end{figure}

	Thus, abusing the notation, we define $\BrgS$ to be a \emph{sequence of \bridges{}} ordered by their visit time on any $s$-$t$ path.
	Such a bridge sequence $\BrgS$ implies an increasing part of the graph being reachable from $s$ in $G\setminus\brgS_i$, as $i$ increases.
	We thus divide the graph reachable from $s$ into \emph{bridge components} $\ComB=\{\comB_1,\comB_1,...,\comB_{|\BrgS|+1}\}$, where $\comB_i$ (for $i\leq |\BrgS|$) denotes the part of the graph that is reachable from $s$ in $G\setminus \brgS_i$ but was not reachable in $G\setminus \brgS_{i-1}$ (if any).
	Additionally, for notational convenience we assume $\comB_{|B|+1}$ to be the part of the graph reachable from $s$ in $G$, but not in $G \setminus \brgS_{|B|}$ (see \Cref{fig:bridgeCSa}).
	Since bridge components are separated by \bridges{}, every $s$-$t$ path enters $\comB_i$ at a unique node ($\head{}(\brgS_{i-1})$ or $s$ for $\comB_1$) referred as its \textit{entrance}.
	Similarly, it leaves $\comB_i$ at a unique node ($\tail{}(\brgS_i)$ or $t$ for $\comB_{|B|+1}$) referred as its \textit{exit}.

	Similarly, the $s$-$t$ \emph{articulation points} are defined as the set of nodes $A\subseteq V$, such that removal of any \articulationpoint{} in $A$ disconnects all $s$-$t$ paths in $G$. Thus, $A=\{a_1,a_2,...,a_{|A|}\}$ is a set of nodes such that $\forall a_i\in A$ there exist no path from $s$ to $t$ in $G- a_i$. 
	The \articulationpoints{} in $A$ also follow a fixed order in every $s$-$t$ path (like \bridges{}), so $A$ can be treated as a sequence and it defines the corresponding components~$\ComB$ (see \Cref{fig:bridgeCSb}). Note that the {\em entrance} and {\em exit} of an articulation component $\comB_i$ are the preceding and succeeding \articulationpoints{} (if any), else $s$ and $t$ respectively.
	
	The $s$-$t$ bridges and articulation points along with their component associations can be computed in linear time, using either flows as described above, or the referenced simplification.
	
	\begin{theorem}[restate = linearst, name = \cite{FordF56,CairoKRSTZ20}]
	\label{t:linearst}
	Given a graph $G := (V, E)$ with $n$ nodes, $m$ edges and $s,t\in V$, there exists an algorithm to compute all $s$-$t$ bridges and $s$-$t$ articulation points, along with their component associations, in $O(m+n)$ time.
	\end{theorem}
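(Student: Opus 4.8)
The plan is to reduce the whole computation to one unit-capacity maximum-flow computation and to exploit the total order of the \bridges{} established in \Cref{lem:bridgeOrder}. First I would restrict attention to the part of $G$ that actually matters: one forward traversal from $s$ and one reverse traversal from $t$ identify the nodes and edges lying on some $s$-$t$ path, and every other element can be deleted without affecting any \bridge{}, any \articulationpoint{}, or their component associations. This preprocessing costs $O(m+n)$, and all further work is done in the pruned graph where every node lies on an $s$-$t$ path.

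Next I would find an arbitrary $s$-$t$ path $P = (s = v_0, e_1, v_1, \dots, e_k, v_k = t)$ by a single traversal, route one unit of flow along it, and then attempt exactly one further augmentation in the residual graph. If this second augmentation succeeds, the $s$-$t$ edge connectivity is at least two, so no single edge is an $s$-$t$ cut and there are no \bridges{}. If it fails, the connectivity is exactly one, and by \Cref{lem:bridgeOrder} all \bridges{} lie on $P$ in their path order. Since we stop after the second attempt, only $O(1)$ units of flow are ever routed, so this step also runs in $O(m+n)$.

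The heart of the argument is extracting the individual \bridges{} and their components without recomputing reachability once per candidate edge. In the residual graph $G_f$ (where the saturated path edges are reversed) I would grow the set reachable from $s$; the \emph{unique} edge leaving this reachable set is the first \bridge{} $b_1$, and the reachable set itself is the first component $C_1$. Contracting $C_1$ into $s$ and repeating reveals $b_2$ and $C_2$, and so on, until $t$ enters the reachable set. Because each node and edge is discovered only in the single step in which it first becomes reachable and is never revisited after contraction, the nested reachability sets $C_1 \subseteq C_1 \cup C_2 \subseteq \cdots$ are produced in $O(m+n)$ total time, yielding both the ordered \bridges{} and the component to which each node belongs.

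Finally, the \articulationpoints{} follow from the \emph{node expansion} defined above: replacing each node $v$ by an internal edge turns node-disjointness into edge-disjointness, so an $s$-$t$ vertex cut of size one becomes an internal edge that is an $s$-$t$ bridge of the expanded graph. Running the bridge algorithm on the expanded graph, whose size is still $O(m+n)$, and reading off the internal \bridges{} gives all \articulationpoints{} together with their components as the same reachability increments. The step I expect to be the main obstacle is the third one: proving that the per-edge cut test and the component assignment can both be carried out by a \emph{single} amortised traversal rather than one reachability computation per candidate, which is precisely where the total order of \Cref{lem:bridgeOrder} and the nested structure of the reachability sets must be combined to secure the linear bound.
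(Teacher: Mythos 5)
Your proposal is correct and takes essentially the same route as the paper, which does not prove this theorem itself but cites \cite{FordF56,CairoKRSTZ20}: push one unit of flow along an arbitrary $s$-$t$ path, then iteratively grow reachability in the residual graph from $s$, contracting each discovered component to reveal the next unit cut (your amortised single-traversal argument is exactly the nesting that makes this linear), with node expansion reducing \articulationpoints{} to \bridges{}. One cosmetic slip: your preprocessing computes the nodes reachable from $s$ and co-reachable to $t$ (i.e., on some $s$-$t$ \emph{walk}), not those on some $s$-$t$ \emph{path} --- the latter set is NP-hard to identify by the paper's own \detour{} reduction --- but since you only use the fact that deleting everything else preserves all $s$-$t$ paths, the step is sound (though note it discards the component associations of $s$-reachable dead-end nodes, which the paper's component definition covers; running your residual reachability on the unpruned $s$-reachable subgraph recovers them at no extra cost).
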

	
	\section{Safety for \textnormal{\EPmodel{}} and \textnormal{\ETmodel{}}}
	\label{s:safetyEP}
	
	The \bridge{} algorithm (\Cref{t:linearst}) is the main building block for proving \Cref{t:linearcases}.
	Recall that we simplified the corresponding problems to only finding the maximal safe edge sequences.
	The solution to \EP{} directly follows from the \bridge{} algorithm.
	
	Observe that for two \bridges{} to form a safe sequence, they need to be adjacent.
	In the \EPmodel{} model, this is also sufficient, because visiting any other edge from the intermediate node would repeat the node to reach the latter edge (see the thick blue line in \Cref{f:s-t-safety}).
	Therefore, we get the following characterisation.
	
	\begin{theorem}[restate = charep, name = Safety for \EPmodel{}]
	    \label{t:charesp}
		A substring of the \bridgesequence{} is safe under the \EPmodel{} model, if and only if each consecutive pair of edges is adjacent.
	\end{theorem}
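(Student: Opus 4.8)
The plan is to reduce the safety of a substring $b_i, b_{i+1}, \dots, b_j$ of the \bridgesequence{} to a purely local condition on each of its consecutive pairs. The starting point is \Cref{lem:bridgeOrder}: every $s$-$t$ path traverses all of $b_i, \dots, b_j$, and in exactly this order. Hence such a substring is safe precisely when, in \emph{every} $s$-$t$ path, each $b_{k+1}$ is traversed immediately after $b_k$, with no intervening edge. The whole argument therefore comes down to characterising, for a single consecutive pair $b_k, b_{k+1}$, exactly when no edge can ever appear between them; recall that adjacency of two edges means $\head{}(b_k) = \tail{}(b_{k+1})$.

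First I would prove the ``if'' direction. Assume every consecutive pair is adjacent, and fix an arbitrary $s$-$t$ path $P$ together with a pair $b_k, b_{k+1}$. Right after traversing $b_k$, the path $P$ sits at the node $\head{}(b_k) = \tail{}(b_{k+1})$. Since $b_{k+1}$ is an \bridge{}, $P$ must traverse it at some later point, and to do so it must again be located at $\tail{}(b_{k+1}) = \head{}(b_k)$. If $P$ took any edge other than $b_{k+1}$ at this moment, it would have to return to this same node before traversing $b_{k+1}$, repeating a node and contradicting that $P$ is a path. Hence $P$ takes $b_{k+1}$ immediately after $b_k$. Applying this to every consecutive pair shows that $b_i, \dots, b_j$ occurs as a contiguous substring of the edge sequence of $P$; as $P$ was arbitrary, the substring is safe.

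For the ``only if'' direction I would argue by contraposition: suppose some consecutive pair $b_k, b_{k+1}$ is not adjacent, so $\head{}(b_k) \neq \tail{}(b_{k+1})$. The \island{} $\comB_{k+1}$ between them has entrance $\head{}(b_k)$ and exit $\tail{}(b_{k+1})$, and by the single-entrance/single-exit structure established in the preliminaries, every $s$-$t$ path crosses $\comB_{k+1}$ from its entrance to its exit. Since these two nodes differ, this crossing uses at least one edge, so in every $s$-$t$ path there is an edge strictly between $b_k$ and $b_{k+1}$. Consequently the pattern ``$b_k$ followed immediately by $b_{k+1}$'' never occurs, so the substring containing this pair is not a substring of any $s$-$t$ path and is therefore not safe.

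The argument is short, and the one genuinely model-specific step — the place where the \EPmodel{} hypothesis is essential rather than decorative — is the ``if'' direction's use of the no-repeated-node property to forbid detours out of and back into the shared node. I expect this to be the main point to get right; the ``only if'' direction is then a direct consequence of the \island{} structure. The degenerate case of a length-one substring is safe and satisfies the condition vacuously, so it needs no separate treatment.
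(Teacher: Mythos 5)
Your proposal is correct and takes essentially the same route as the paper: your ``if'' direction is exactly the paper's inductive argument, combining \Cref{lem:bridgeOrder} with the no-repeated-node property of paths to force $b_{k+1}$ immediately after $b_k$. Your ``only if'' direction, argued contrapositively via the entrance/exit structure of the \island{} $C_{k+1}$, is a slightly heavier but equivalent rephrasing of the paper's one-line observation that a safe substring occurs contiguously in any single $s$-$t$ path, which already forces adjacency.
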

	\begin{proof}
		$(\Rightarrow)$ Let $L$ be a substring of the \bridgesequence{} that is safe under the \EPmodel{} model.
		Then $L$ is a subpath of an arbitrary $s$-$t$ path.

		$(\Leftarrow)$ Let $L := (e_1, \dots, e_{|L|})$ be a substring of the \bridgesequence{} such that each consecutive pair of edges is adjacent.
		Let $W$ be an $s$-$t$ path and $W_E$ its $E$-subsequence.
		We prove that $L$ is a substring of $W_E$ by induction.
		For the base case, note that $W_E$ contains $e_1$ since $L$ is made of strong $s$-$t$ bridges.
		For the inductive step, assume that $W_E$ contains $(e_1, \dots, e_i)$ as substring.
		Since $L$ is a substring of the \bridgesequence{}, $W$ contains $e_{i+1}$ after $e_i$.
		And since $\head{}(e_i) = \tail{}(e_{i+1})$ and $W$ is a path, it contains $e_{i+1}$ immediately after $e_i$.
	\end{proof}
	
	Since each $s$-$t$ path is an $s$-$t$ trail, adjacency is still necessary for safety of \ETmodel{}, but not sufficient.
	In \Cref{f:s-t-safety}, the $s$-$t$ trail that uses the red cycle breaks the safe \EPmodel{} (thick blue line).
    Thus, such a red cycle or the non-adjacency of \bridges{} makes a {\em \forbiddentrail{}}.
	
	\begin{definition}[\ForbiddenTrail{}]
		\label{d:cuttingloop}
		A \emph{\forbiddentrail{}} is a path that connects two consecutive \bridges{} $b_i, b_{i+1}$ for $i \in \{1, \dots, |B| - 1\}$ without using either of them.
	\end{definition}

\begin{figure}[tbh!]
\centering
\begin{subfigure}[b]{0.47\textwidth}
\centering
\includegraphics[scale=1]{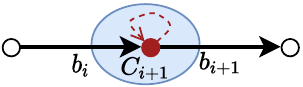}
\caption{A \forbiddentrail{} between adjacent \bridges{}.}
\label{f:forbiddentrails:adjacent}
\end{subfigure}~
\begin{subfigure}[b]{0.52\textwidth}
\centering
\includegraphics[scale=1]{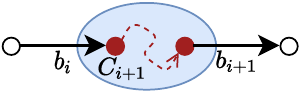}
\caption {A \forbiddentrail{} between non-adjacent \bridges{}.}
\label{f:forbiddentrails:separate}
\end{subfigure}
\caption{Examples for \forbiddentrails{} (red) 
	    The bold edges are \bridges{} and the blue regions mark \islands{}. 
	    }
\label{f:forbiddentrails}
\end{figure}

	
	Any path $P$ which is a \forbiddentrail{} for a sequence $(b_i, b_{i+1})$ lies completely within the \island{} $C_{i+1}$.
	Otherwise, $P$ would also contain either $b_i$ or $b_{i+1}$, which is not allowed.
	Note also that $P$ is a cycle if $\head{}(b_i) = \tail{}(b_{i+1})$.
	See \Cref{f:forbiddentrails:adjacent} for an illustration of a circular \forbiddentrail{} and \Cref{f:forbiddentrails:separate} for a non-circular one.
	We get the following characterisation.

	
	\begin{theorem}[restate = charet, name = Safety for \ETmodel{}]
		\label{t:charesw}
		A substring of the \bridgesequence{} is safe under the \ETmodel{} model, if and only if it has no \forbiddentrail{}.
	\end{theorem}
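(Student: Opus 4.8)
The plan is to prove both implications by exploiting the component structure $C_1, b_1, C_2, b_2, \dots, b_{|B|}, C_{|B|+1}$ induced by the \bridgesequence{}. First I would record two facts about \emph{trails} (not just paths). Since removing any \bridge{} $b$ destroys every $s$-$t$ path, it destroys every $s$-$t$ walk, hence every $s$-$t$ trail; so each $b_k$ occurs on every $s$-$t$ trail and, being a single edge, exactly once. The ordering argument of \Cref{lem:bridgeOrder} then carries over verbatim (it only uses that each bridge is traversed once together with the reachability split of $G \setminus b_k$), so every $s$-$t$ trail $T$ decomposes as $T = T_1\, b_1\, T_2\, b_2 \cdots b_{|B|}\, T_{|B|+1}$, where each $T_{k}$ is a walk from the entrance to the exit of $C_k$. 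The theorem then reduces to understanding when a block $T_{k+1}$ can be forced to be empty.

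For the ($\Leftarrow$) direction I assume $L = (b_i,\dots,b_j)$ has no \forbiddentrail{} and take an arbitrary $s$-$t$ trail $T$. It suffices to show that for each consecutive pair $(b_k, b_{k+1})$ with $i \le k < j$ the intermediate block $T_{k+1}$ is empty, for then $b_i,\dots,b_j$ appears contiguously in the edge sequence of $T$. The key step is to argue that $T_{k+1}$ is entirely contained in its \island{} $C_{k+1}$: leaving $C_{k+1}$ forward requires the edge $b_{k+1}$, and any excursion backward into $C_{\le k}$ would have to return through $b_k$; since a \bridge{} is the unique forward edge of the cut it defines, and both $b_k$ and $b_{k+1}$ are already spent within $T$, neither crossing is available. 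Hence a non-empty $T_{k+1}$ would be a non-empty walk inside $C_{k+1}$ from $\head{}(b_k)$ to $\tail{}(b_{k+1})$ avoiding $b_k,b_{k+1}$, and would therefore contain a path (a cycle if the two endpoints coincide), i.e.\ a \forbiddentrail{}---contradicting the hypothesis. So every such block is empty and $L$ is safe.

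For the ($\Rightarrow$) direction I assume $L$ contains a consecutive pair $(b_k,b_{k+1})$ admitting a \forbiddentrail{} $Q$, and I construct a single $s$-$t$ trail witnessing that $L$ is not safe. Starting from any $s$-$t$ path $P$, I splice $Q$ into $P$: take the prefix of $P$ up to and including $b_k$, follow $Q$ from $\head{}(b_k)$ to $\tail{}(b_{k+1})$ inside $C_{k+1}$, and then continue with the suffix of $P$ from $b_{k+1}$ onward. The resulting walk $T$ runs from $s$ to $t$, and it is a trail because the prefix and suffix of $P$ use no interior edge of $C_{k+1}$ (the prefix ends with $b_k$ and lives in $C_{\le k}$, the suffix starts with $b_{k+1}$ and lives in $C_{\ge k+2}$), so they are edge-disjoint from the interior path $Q$, and edge-disjoint from each other because $P$ is a path. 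Since $Q$ is non-empty, $b_k$ and $b_{k+1}$ are not adjacent in the edge sequence of $T$, so $(b_k,b_{k+1})$---and hence $L$---is not a substring of $T$.

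I expect the main obstacle to be the confinement claim used in both directions: that a walk strictly between two consecutive \bridges{} cannot escape its \island{} $C_{k+1}$, and dually that the spliced prefix and suffix never touch the interior of $C_{k+1}$. Both rest on the same structural fact---that each \bridge{} is the unique forward edge across the cut separating the components before it from those after---together with the observation that each \bridge{} is consumed exactly once along a trail. Once this is phrased cleanly, the remaining steps (extracting a path or cycle from a non-empty walk, and checking edge-disjointness of the splice) are routine.
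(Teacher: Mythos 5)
Your proposal is correct and follows essentially the same route as the paper: the ($\Rightarrow$) direction splices the \forbiddentrail{} into an existing $s$-$t$ route (you use an $s$-$t$ path where the paper inserts it into an arbitrary $s$-$t$ trail containing $L$), and the ($\Leftarrow$) direction rests on the same two facts the paper's induction uses---each \bridge{} is traversed exactly once by any trail, so the segment between consecutive \bridges{} is confined to its \island{} and, if non-empty, yields a path or cycle that is a \forbiddentrail{}. Your write-up merely makes explicit some steps the paper leaves implicit (the block decomposition, the cut-uniqueness argument for confinement, and the path/cycle extraction), which is a presentational rather than substantive difference.
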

	\begin{proof}
	
		$(\Rightarrow)$ Let $L$ be a substring of the \bridgesequence{} that is safe under the \ETmodel{} model.
		Assume for a contradiction that $L$ contains a \forbiddentrail{} $P$ from $\head{}(e_i)$ to $\tail{}(e_{i+1})$ for some $i \in \{1, \dots, |L| - 1\}$.
		As such, $P$ is completely inside a \island{} $C$ (the one with exit $\tail{}(e_{i+1})$).
		It holds that any $s$-$t$ trail $W$ (which contains $L$ as substring, because $L$ is safe) does not contain any edge from $C$.
		As a result, we can insert $P$ in $W$ between $\head{}(e_i)$ and $\tail{}(e_{i+1})$ to obtain an $s$-$t$ trail $W'$ that does not contain $L$.
		This contradicts the safety of $L$.
		
		$(\Leftarrow)$ Let $L := (e_1, \dots, e_{|L|})$ be a substring of the \bridgesequence{} that has no \forbiddentrail{}.
		Let $W$ be an $s$-$t$ trail and let $W_E$ be the $E$-subsequence of $W$.
		We prove that $L$ is a substring of $W_E$ by induction.
		Note that $W_E$ contains $e_1$, by assumption.
		For the inductive step, suppose that $W_E$ contains $(e_1, \dots, e_i)$ as substring.
		Since $L$ is a substring of the \bridgesequence{}, $W_E$ contains $e_{i+1}$ after $e_i$.
		And since $L$ has no \forbiddentrail{}, each non-empty $\head{}(e_i)$-$\tail{}(e_{i+1})$ path $P$ contains $e_i$ or $e_{i+1}$.
		But $e_i$ is already used by $W$ on the way to $\head{}(e_i)$, and $e_{i+1}$ needs to be used to reach $t$ from $\tail{}(e_{i+1})$.
		So no such $P$ can be a subwalk of $W$, and thus $e_i$ is immediately followed by $e_{i+1}$ in $W_E$.
	\end{proof}
	
	Using these characterisations, the \bridge{} algorithm can be directly used to solve the problems. After computing the \bridges{} and \islands{},  the \bridgesequence{} is split between non-adjacent pairs of consecutive \bridges{}, solving \EP{}.
	For \ET{}, the residual pairs of consecutive \bridges{} are checked for a \forbiddentrail{} by checking if there is an incoming edge to $\tail{}(b_{i})$ from a node in $C_{i}$.
	Thus, both problems can be solved in $O(m + n)$ time, proving \Cref{t:linearcases}.
	
	\section{Safety for \textnormal{\EWmodel{}}}
	\label{s:min-forb-walks}
	
	Unlike the previous problems, solving \EW{} requires another algorithmic building block.
	Again, since each $s$-$t$ trail is an $s$-$t$ walk, the absence of \forbiddentrails{} is necessary for safety for \EWmodel{}, but not sufficient. In \Cref{f:s-t-safety}, an $s$-$t$ walk using the green cycle breaks the thick red line that is safe in \ETmodel{}.
	Thus, such a green cycle or a \forbiddentrail{} makes a {\em \forbiddenwalk{}}.
	
	
	\begin{definition}[\ForbiddenWalk{}]
		\label{d:forbiddenwalk}
		A \emph{\forbiddenwalk{}} is a path that connects two consecutive \bridges{} $e_i, e_{i+1}$ from a substring $L$ of the bridge sequence for $i \in \{1, \dots, |L| - 1\}$ without using the first or last edge from $L$.
	\end{definition}

	\begin{figure}[tbh]
	    \centering
	    \includegraphics[scale=0.8]{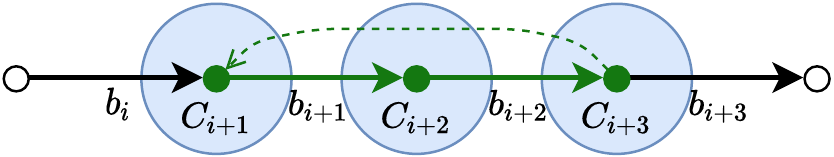}
	    \caption{An example for a \forbiddenwalk{}, highlighted in green.
	    The bold edges are \bridges{} and the blue regions mark \islands{}.}
	    \label{f:forbiddenwalks}
	\end{figure}

    \Forbiddenwalks{} can stay within a single \island{} (like \forbiddentrails{}, recall \Cref{f:forbiddentrails}), but can also include multiple \islands{} (see \Cref{f:forbiddenwalks}).

    \subparagraph{Characterisation for \textnormal{\EWmodel{}}.} Since a \forbiddenwalk{} $P$ contains neither the first or last edge of its corresponding substring $L$ of the bridge sequence, it contains no prefix or suffix of $L$.
    Therefore, if $P$ is inserted into an $s$-$t$ path that contains $L$, then the result contains a prefix and a suffix of $L$ that together spell $L$, but that are interrupted by a subwalk that neither completes a prefix or suffix nor contains $L$ itself.
    Thus, $P$ contradicts the safety of $L$.
    On the other hand, if it is possible to construct an $s$-$t$ walk $W$ that does not contain $L$, then $W$ contains a last occurrence of $e_1$, the first \bridge{} of $L$.
    After this last occurrence of $e_1$, $W$ contains a non-empty subwalk $W'$ between a pair of consecutive \bridges{} $e_i, e_{i+1}$ from $L$ that does not contain $e_{i+1}$ or $e_1$.
    Therefore, $W'$ is a trail breaker for $L$.
    Resulting, we get a characterisation, using the following two lemmas required for its proof.
    
    \begin{lemma}
    	A non-zero bridge length minimal \forbiddenwalk{} with the bridge sequence starting at $\brgS_i$ and ending at $\brgS_{j}$ ($i\leq j$), satisfies the following conditions
    	\begin{enumerate}[(a)]
    	    \item There exists a {\em backward edge} $e$ from $C_{j+1}$ to $C_{i}$.
    	    \item The exit of $C_i$ is reachable from $\head{}(e)$ (i.e. marked).
    	\end{enumerate}
    	\label{l:backwardsedgesareforbiddenwalks}
	\end{lemma}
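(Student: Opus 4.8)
The plan is to show that a minimal \forbiddenwalk{} $P$ of non-zero bridge length is forced into a single cyclic ``back-jump'', and then to read the backward edge $e$ off the point where $P$ turns around. By hypothesis the bridge sequence of $P$ is exactly $b_i, b_{i+1}, \dots, b_j$ with $i \le j$, so $P$ traverses precisely these \bridges{} and no others. Each forward crossing of a \bridge{} $b_k$ moves $P$ from $C_k$ into $C_{k+1}$, so crossing $b_i, \dots, b_j$ raises the index of the \island{} containing $P$ from $i$ (the \island{} $C_i$ just before $b_i$) up to $j+1$ (the \island{} $C_{j+1}$ just after $b_j$). Since $P$ only connects two consecutive \bridges{} of its substring $L$, its two endpoints lie in the same \island{} (or coincide); hence, going once around $P$, the net change of \island{} index is zero. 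As the forward crossings contribute $+(j-i+1)$, the remaining edges must contribute $-(j-i+1)$, and the only edges that lower the index are the backward edges. This accounting is the heart of the argument.

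First I would use minimality to reduce $P$ to a single backward edge. If $P$ used two or more backward edges, I would keep only the portion between the deepest return and the crossing of $b_j$; this is again a \forbiddenwalk{}, of non-zero bridge length and still avoiding the first and last \bridge{} of a suitable substring, but strictly shorter, contradicting minimality. So $P$ crosses $b_i, \dots, b_j$ once each and has exactly one backward edge $e$, which alone must realise the full drop $-(j-i+1)$. To locate $e$ precisely, note that the highest-indexed \island{} visited by $P$ is $C_{j+1}$, entered only by crossing $b_j$; as $b_j$ is the last \bridge{} in the sequence, $P$ never uses the forward exit $b_{j+1}$ of $C_{j+1}$, so it can leave $C_{j+1}$ only along $e$, giving $\tail{}(e) \in C_{j+1}$. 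Combined with the drop of $j-i+1$ levels this forces $\head{}(e) \in C_i$, which is part~(a). For part~(b), observe that after traversing $e$ the walk $P$ reaches $\tail{}(b_i)$, the exit of $C_i$, in order to make its first \bridge{} crossing $b_i$, and that it crosses no \bridge{} in between (the sequence begins at $b_i$); hence this sub-walk stays inside $C_i$ and witnesses that the exit of $C_i$ is reachable from $\head{}(e)$.

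The step I expect to be the main obstacle is exactly the minimality reduction: making precise that excising a redundant backward excursion leaves a legal \forbiddenwalk{}---one that still connects two consecutive \bridges{} of an appropriate substring $L$ without touching its first or last \bridge{}, and that still has non-zero bridge length---so that the ``one backward edge'' picture is truly forced rather than merely typical. The degenerate situations also need a line each: $i=j$ (a single crossed \bridge{}), \islands{} $C_i$ or $C_{j+1}$ consisting of a single node, and the case where $P$ is a genuine cycle whose two endpoints coincide. Once these are dispatched, the cyclic skeleton $\tail{}(b_i) \to b_i \to \cdots \to b_j \to e \to \head{}(e) \to \cdots \to \tail{}(b_i)$ is pinned down and parts~(a) and~(b) are immediate from the two observations above.
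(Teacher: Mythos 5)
There is a genuine gap, and it sits exactly where you predicted: the reduction to a single backward edge. Two things go wrong. First, the excised object is not a \forbiddenwalk{}. Writing the breaker as the cycle $\tail{}(b_i) \to b_i \cdots b_j \to \head{}(b_j) \leadsto \cdots \leadsto \tail{}(b_i)$, the ``portion between the deepest return and the crossing of $b_j$'' is the open path $\head{}(e_k) \leadsto \tail{}(b_i) \to b_i \cdots \to b_j$; discarding the descent leaves nothing that returns from $C_{j+1}$ to $C_i$, so the remainder does not connect two consecutive \bridges{} of any substring $L$ and cannot be inserted into an $s$-$t$ walk. Worse, it still contains all of $b_i, \dots, b_j$, so even if you could close it up, its \bridgesequence{} would be unchanged. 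Second, ``strictly shorter'' is the wrong minimality: the paper's minimal \forbiddenwalks{} are inclusion-minimal under \emph{bridge-sequence-substring domination}, not edge-length-minimal. A breaker with fewer edges but the same bridge sequence contradicts nothing; to refute minimality you must exhibit a breaker whose bridge sequence is a \emph{proper substring} of $b_i, \dots, b_j$.

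The repair keeps the \emph{last} backward edge rather than excising around it. If there are several backward edges, let $e'$ be the final one, entering $C_i$ from some island $C_m$ with $m \le j$. Then $\tail{}(b_i) \to b_i \cdots b_{m-1} \to \head{}(b_{m-1}) \leadsto \tail{}(e') \to e' \to \head{}(e') \leadsto \tail{}(b_i)$ is a \forbiddenwalk{} with bridge sequence $b_i, \dots, b_{m-1}$, a proper substring, contradicting minimality. This construction needs one nontrivial fact that your accounting argument does not supply: every node of a \island{} is reachable from the \island{}'s entrance by a path staying inside that \island{} (because any simple $s$-$v$ path for $v \in C_m$ crosses $b_{m-1}$ exactly once and cannot dip below $C_m$ afterwards without recrossing it). This is the same fact the paper invokes when it asserts that $\tail{}(e)$ is reachable from $\head{}(b_j)$ inside $C_{j+1}$; the paper's own proof keeps one backward edge and argues domination of bridge sequences directly, which is why it gets the contradiction that your length-based excision misses. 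The rest of your argument is sound and in places cleaner than the paper's: the island-index accounting, the observation that $C_{j+1}$ can only be left via $e$ (pinning $\tail{}(e) \in C_{j+1}$ and, since no \bridge{} is crossed after $e$, $\head{}(e) \in C_i$), and the derivation of part (b) from the final leg inside $C_i$ all go through once the single-backward-edge step is fixed.
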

	    
	\begin{proof}
        We first prove that the minimal \forbiddenwalk{} contains a single backward edge by contradiction. Assume it contains multiple backward edges $e_1,...,e_k$ in order. Now consider a \forbiddenwalk{} using only the backward edge $e_1$, clearly its bridge sequence is a substring of the bridge sequence of the original \forbiddenwalk{} and hence dominates it.
        
        Now, to complete the \forbiddenwalk{} we require $\tail{}(e)$ to be reachable from $\head{}(b_{j})$ and $\tail{}(b_{i})$ to be reachable from $\head{}(e)$. Since $e$ originates from $C_{j+1}$ with its entrance $\head{}(b_j)$ we can reach $\tail{}(e)$. The later case is ensured by the algorithm by marking only those nodes in $C_i$ which can reach $\tail{}(b_i)$. 
        Finally, if the edge $e$ ends in $C_i$, it cannot start in a different \island{} $\comB_k$. If $k < i$ then \forbiddenwalk{} would contain an \bridge{} before $b_i$; symmetrically also $k \not\geq j$ holds.  If $k \in \{i + 1, \dots, j\}$, then there exists a \forbiddenwalk{} that starts in $b_i$ and ends in $b_{k-1}$, which contradicts the minimality of the \forbiddenwalk{}, ensuring that $e$ starts in $C_j$.
	\end{proof}
	
    \begin{lemma}
        Given the ordered set of minimal \forbiddenwalks{} $\mathcal{P} = \{P_1,P_2,...,P_k\}$ in the \EWmodel{}, a maximal safe sequence begins with $b_1$ and ends with $b_{|B|}$ if $\mathcal{P} = \emptyset$, and otherwise either:
        \begin{enumerate}[(a)]
            \item starts with $b_1$ and ends with the end of $P_1$, or \label{l:maxforbiddenwalkstructure:first}
            \item ends with $b_{|B|}$ and starts with the start of $P_k$, or \label{l:maxforbiddenwalkstructure:last}
            \item starts with the start of $P_i$ and ends with the end of $P_{i+1}$ for some $i \in \{1, \dots, k - 1\}$. \label{l:maxforbiddenwalkstructure:middle}
        \end{enumerate}
        \label{l:maxforbiddenwalkstructure}
    \end{lemma}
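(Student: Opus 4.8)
The plan is to reduce the entire statement to a clean interval condition on the \bridgesequence{} $B = (b_1, \dots, b_{|B|})$ and then read off the maximal safe sequences combinatorially. I represent each minimal \forbiddenwalk{} $P_\ell$ by the interval $[i_\ell, j_\ell]$, where $b_{i_\ell}$ is its start and $b_{j_\ell}$ its end in the sense of \Cref{l:backwardsedgesareforbiddenwalks}; for a degenerate, zero-bridge-length \forbiddenwalk{} (a \forbiddentrail{} or a non-adjacency between $b_m$ and $b_{m+1}$) I set $i_\ell = m+1$ and $j_\ell = m$. The first and central step is to pin down the \emph{breaking condition}: a minimal \forbiddenwalk{} $P_\ell$ makes a substring $L = (b_a, \dots, b_b)$ unsafe if and only if $a < i_\ell$ and $b > j_\ell$, i.e.\ $L$ reaches strictly left of $b_{i_\ell}$ and strictly right of $b_{j_\ell}$. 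I would derive this from \Cref{d:forbiddenwalk} together with the backward-edge structure of \Cref{l:backwardsedgesareforbiddenwalks}: the loop $\head{}(b_{j_\ell}) \rightsquigarrow \tail{}(b_{i_\ell})$ through the backward edge lets an $s$-$t$ walk traverse $b_{i_\ell}, \dots, b_{j_\ell}$, return, and traverse them again, and inserting this detour interrupts $L$ exactly when $L$ owns a \bridge{} strictly before $b_{i_\ell}$ (so the detour can avoid the first edge of $L$) and one strictly after $b_{j_\ell}$ (so it avoids the last edge of $L$). The delicate off-by-one is the strictness: if $a = i_\ell$ the second pass over the repeated block already spells $L$ contiguously, and if $b = j_\ell$ the first pass does, so in both cases $L$ survives.

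Next I would fix the combinatorial structure of $\mathcal{P}$. By the domination argument in the proof of \Cref{l:backwardsedgesareforbiddenwalks}, a \forbiddenwalk{} whose span is a substring of another's breaks a superset of substrings, so the minimal \forbiddenwalks{} form an antichain under span-containment; equivalently their intervals are Pareto-optimal for maximising $i_\ell$ while minimising $j_\ell$. Sorting by start therefore yields $i_1 < i_2 < \dots < i_k$ and \emph{simultaneously} $j_1 < j_2 < \dots < j_k$, which is precisely the meaning of ``the ordered set $\mathcal{P}$''. Combining this with the characterisation that $L$ is safe iff it has no \forbiddenwalk{} (equivalently, since every \forbiddenwalk{} is dominated by a minimal one, no minimal \forbiddenwalk{}), I obtain the working criterion: $L = (b_a, \dots, b_b)$ is safe iff for every $\ell$ either $a \ge i_\ell$ or $b \le j_\ell$.

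From this monotone criterion the maximal safe sequences fall out. For the forward direction I verify each candidate in (a)--(c) is safe and inextensible: for the type-(c) sequence $(b_{i_\ell}, \dots, b_{j_{\ell+1}})$, safety holds because every $P_m$ with $m \le \ell$ has $i_m \le i_\ell$ while every $P_m$ with $m \ge \ell+1$ has $j_m \ge j_{\ell+1}$; extending one step left reactivates $P_\ell$ (now $i_\ell - 1 < i_\ell$ and $j_{\ell+1} > j_\ell$) and one step right reactivates $P_{\ell+1}$, using the strict monotonicity. Types (a), (b) and the case $\mathcal{P} = \emptyset$ are the boundary analogues (and a \forbiddenwalk{} with $i_\ell = 1$ or $j_\ell = |B|$ breaks nothing, so it never lies in $\mathcal{P}$). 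For completeness I take an arbitrary maximal safe $(b_a, \dots, b_b)$: inextensibility to the left forces $a = 1$ or, for the $P_\ell$ breaking $(b_{a-1}, \dots, b_b)$, the conjunction of ``$P_\ell$ breaks the extension'' and ``$P_\ell$ does not break $(b_a,\dots,b_b)$'' pins $a = i_\ell$; symmetrically $b = |B|$ or $b = j_m$; finally the safety of $(b_a, \dots, b_b)$ together with the strict monotonicity forces $m = \ell+1$, placing the sequence in (a)--(c).

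The \textbf{main obstacle} I anticipate is the first step, the precise breaking condition: getting the strict-versus-nonstrict inequalities right and, especially, handling the degenerate zero-bridge-length \forbiddenwalks{} uniformly with the genuine backward-edge ones, where the convention $i_\ell = j_\ell + 1$ makes the ``interval'' empty as a set yet still yields the clean condition $a < i_\ell \wedge b > j_\ell$. Once that condition and the antichain monotonicity are in place, the remainder is a short order-theoretic argument.
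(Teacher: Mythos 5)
Your proposal is correct and takes essentially the same route as the paper's proof: both arguments rest on the facts that a minimal \forbiddenwalk{} with span $[i_\ell, j_\ell]$ (with the $i_\ell = j_\ell + 1$ convention for zero-bridge-length ones) breaks exactly the substrings properly containing its span, that the ordered minimal \forbiddenwalks{} have strictly increasing starts and ends by the domination/antichain property, and that maximality then pins each start to $b_1$ or some $i_\ell$ and each end to $b_{|B|}$ or the end of the consecutive breaker. The paper leaves the interval breaking condition and the monotonicity implicit (phrased as ``proven unsafe by $P_1$'' and ``since $\mathcal{P}$ is ordered''), whereas you make them explicit via \Cref{t:charew}; this is a more formal rendering of the same argument, not a different one.
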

    \begin{proof}
        The case where $\mathcal{P} = \emptyset$ is trivial.
        Otherwise, let $L$ be a maximal safe sequence.
        If $L$ starts with an \bridge{} other than $b_1$ that is not the start of a minimal \forbiddenwalk{}, then it can be extended to the left without becoming unsafe, contradicting its maximality.
        By symmetry, $L$ cannot end with an \bridge{} other than $b_{|B|}$ that is not the end of a minimal \forbiddenwalk{}.
        Therefore, the start and end \bridges{} considered in (\labelcref{l:maxforbiddenwalkstructure:first}) --  (\labelcref{l:maxforbiddenwalkstructure:middle}) are sufficient.
        It remains to prove that the pairings are correct.
        
        \begin{enumerate}[(a)]
            \item If $L$ starts with $b_1$, then it needs to end no later than the end of $P_1$, since otherwise it would be proven unsafe by $P_1$.
            It can end no earlier, since no \forbiddenwalk{} ends before $P_1$.
        
            \item If $L$ ends with $b_{|B|}$, then by symmetry with (\labelcref{l:maxforbiddenwalkstructure:first}) it starts with $P_k$.
            
            \item
            If $L$ starts with the start of $P_i$ for some $i \in \{1, \dots, k - 1\}$, then it cannot end after the end of $P_{i+1}$, since otherwise it would be proven unsafe by $P_{i+1}$.
            Furthermore, if it ends in the end of $P_{i+1}$, then it is neither proven unsafe by $P_i$ nor by $P_{i+1}$.
            And, since $\mathcal{P}$ is ordered, it is also not proven unsafe by another \forbiddenwalk{}.
            As such, $L$ ends with the end of $P_{i+1}$. \qedhere
        \end{enumerate}
    \end{proof}
	
	\begin{theorem}[restate = charew, name = Safety for \EWmodel{}]
		\label{t:charew}
		A substring of the \bridgesequence{} is safe under the \EWmodel{} model, if and only if it has no \forbiddenwalk{}.
	\end{theorem}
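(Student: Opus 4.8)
The plan is to prove both directions by contraposition, closely following the proofs of \Cref{t:charesp,t:charesw} but taking care of the one feature that distinguishes \EWmodel{} from the path and trail models: a walk may traverse the same edge several times. Write $L := (e_1, \dots, e_{|L|})$. I would first dispose of a degenerate case. If some consecutive pair $e_j, e_{j+1}$ of $L$ is non-adjacent, then no $s$-$t$ walk can place $e_j$ immediately before $e_{j+1}$, so no walk contains $L$ and $L$ is unsafe; at the same time the path through the \island{} joining them is a \forbiddenwalk{}, so both sides of the equivalence fail and it holds vacuously. Hence I may assume an $s$-$t$ path $W_0$ containing $L$ contiguously, which (by the path argument behind \Cref{t:charesp}) forces $\head{}(e_j) = \tail{}(e_{j+1})$ for every $j$.

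For the direction ``\forbiddenwalk{}~$\Rightarrow$~unsafe'', suppose $P$ is a \forbiddenwalk{} joining $e_i$ and $e_{i+1}$. Because these \bridges{} are adjacent, $P$ is a non-empty closed walk based at $\head{}(e_i)=\tail{}(e_{i+1})$, and I would splice it into $W_0$ at that node to obtain an $s$-$t$ walk $W'$ whose edge sequence reads (prefix of $W_0$) $\cdot (e_1,\dots,e_i) \cdot P_E \cdot (e_{i+1},\dots,e_{|L|}) \cdot$ (suffix of $W_0$). The crucial observation is that $e_1$ and $e_{|L|}$ each occur exactly once in $W'_E$: they occur once in the path $W_0$, and $P$ omits both by definition of a \forbiddenwalk{}. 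Thus any occurrence of $L$ in $W'_E$ would have to start at the unique $e_1$ and finish at the unique $e_{|L|}$; but the inserted $P_E$ leaves these two copies separated by strictly more than the $|L|-2$ edges that an occurrence of $L$ would place between them, so $L$ is not a substring of $W'_E$ and $L$ is unsafe.

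For the converse I would argue the contrapositive: from an $s$-$t$ walk $W$ whose edge sequence does not contain $L$, I extract a \forbiddenwalk{}. Let $p_1$ mark the last traversal of $e_1$ in $W$. The order in which any $s$-$t$ walk must cross the \bridges{}---to reach $\tail{}(\brgS_{k})$ one has already crossed $\brgS_{k-1}$, which follows from the \island{} structure rather than from \Cref{lem:bridgeOrder} (the latter speaks only of paths)---guarantees that $e_2,\dots,e_{|L|}$ are all crossed after $p_1$, with their first post-$p_1$ occurrences $p_2 < \dots < p_{|L|}$ appearing in this order. If every gap between $p_i$ and $p_{i+1}$ were empty then $e_1,\dots,e_{|L|}$ would be consecutive in $W_E$, contradicting the choice of $W$; so some gap is non-empty. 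The subwalk of $W$ strictly between the crossings $p_i$ and $p_{i+1}$ runs from $\head{}(e_i)$ to $\tail{}(e_{i+1})$ and avoids $e_1$ (it lies past the last $e_1$), $e_{i+1}$ (it lies before $p_{i+1}$, the first such crossing), and $e_{|L|}$ (it lies before $p_{|L|}$); reducing it to a simple path---or, in the adjacent case $\head{}(e_i)=\tail{}(e_{i+1})$, to a simple cycle through that node---yields the required \forbiddenwalk{}.

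The main obstacle throughout is exactly the edge reuse permitted by \EWmodel{}. In the forward direction it means I cannot simply point to the inserted detour and declare $L$ broken; I must argue that $L$ cannot resurface elsewhere, which is what the uniqueness of $e_1$ and $e_{|L|}$ delivers. In the converse, it means the natural gap between two \bridge{} crossings is a walk, not a path, so I must both locate the gap correctly (via the last occurrence of $e_1$ and the ordered first occurrences of the later \bridges{}) and verify that the simple path or cycle I carve out still avoids all three forbidden edges and is genuinely non-empty.
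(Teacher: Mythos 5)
Your proof is correct and follows essentially the same route as the paper's: the forward direction splices the \forbiddenwalk{} into a walk containing $L$ (you use a single $s$-$t$ path plus a uniqueness-and-counting argument on $e_1$ and $e_{|L|}$, where the paper inserts the breaker into every occurrence of $L$ in an arbitrary walk), and the converse locates a non-empty gap among the ordered \bridge{} crossings after the last occurrence of $e_1$, exactly as in the paper's induction. Your extra care---disposing of the non-adjacent degenerate case, justifying the crossing order for walks via the \island{} structure rather than \Cref{lem:bridgeOrder}, and reducing the gap subwalk to a simple path or cycle so that it is literally a path as the definition of \forbiddenwalk{} requires---only tightens details the paper leaves implicit.
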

	\begin{proof}
	
		$(\Rightarrow)$ Let $L$ be a substring of the \bridgesequence{} that is safe under the \EWmodel{} model.
		If there is a \forbiddenwalk{} for $L$, then from an $s$-$t$ walk $W$ an $s$-$t$ walk $W'$ can be constructed by inserting the \forbiddenwalk{} into every occurrence of $L$.
		But then, $L$ is not a substring of $W'$.
		
		$(\Leftarrow)$ Let $L := (e_1, \dots, e_{|L|})$ be a substring of the \bridgesequence{} that has no \forbiddenwalk{}.
		Let $W$ be an $s$-$t$ walk and let $W_E$ be its $E$-subsequence.
		Since $L$ is a substring of the \bridgesequence{}, $W_E$ contains a substring $W'_E$ that starts from the last occurrence of $e_1$ and ends in the first occurrence of $e_{|L|}$ after that, and $L$ is a subsequence of $W'_E$.
		We prove that $L$ is a prefix of $W'_E$ by induction.
		By definition, $W'_E$ starts with $e_1$.
		For the inductive step, suppose that $(e_1, \dots, e_i)$ is a prefix of $W'_E$.
		By definition of $W'$, none of its $\head{}(e_i)$-$\tail{}(e_{i+1})$-subwalks contain $e_1$ or $e_{|L|}$.
		Therefore, since $L$ does not have any \forbiddenwalk{}, $e_{i+1}$ immediately follows $e_i$ in $W'_E$.
	\end{proof}
    
    \section{Computing \ForbiddenWalks{} Efficiently}
    \label{s:efficientforbiddenwalks}
    
	Since \forbiddenwalks{} can span over multiple \islands{}, their structure is more complex than that of \forbiddentrails{}.
	But as with \forbiddentrails{}, 
	the only edges of the \forbiddenwalk{} that are relevant, are the \bridges{}.
	Moreover, since \forbiddenwalks{} cannot skip \bridges{} they always correspond to a {\em substring} of the \bridgesequence{}, where \forbiddenwalks{} with same  {\em substrings} are equivalent.
	We refer to this substring as the \emph{bridge sequence} of the \forbiddenwalk{}, and call its first edge the \emph{start} and its last edge the \emph{end} of the \forbiddenwalk{}.
	If a \forbiddenwalk{} that connects $b_i$ to $b_{i+1}, i \in \{1, \dots, |B| - 1\}$ contains no \bridge{} (i.e. it is equivalent to a \forbiddentrail{}), then we call $b_{i+1}$ its start and $b_i$ its end.
	We call the amount of \bridges{} in the bridge sequence of a \forbiddenwalk{} its \emph{bridge length}.
    
	\subparagraph{Minimal \forbiddenwalks{}.} 
	
	\begin{figure}
	    \centering
	    \includegraphics[scale=0.8]{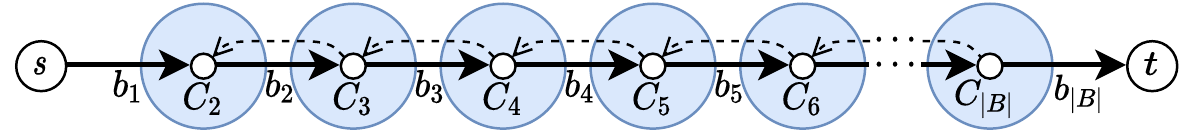}
	    \caption{A graph with $\Theta(n^2)$ \forbiddenwalks{}.
	    The bold edges are \bridges{} and the blue regions mark bridge components.
	    \label{f:n2forbiddenwalks}}
	\end{figure}
	
    Given $s$ and $t$, in the worst case, a graph may contain up to $\Theta(n^2)$ different \forbiddenwalks{}; see \Cref{f:n2forbiddenwalks} for an example.
	In this graph, there are $|B| - i - 1$ \forbiddenwalks{} of bridge length $i$, for each $i \in \{0, \dots, |B| - 2\}$.
	However, if the bridge sequence of a \forbiddenwalk{} is a substring of the bridge sequence of another \forbiddenwalk{}, every walk proven unsafe by the {\em later} is also proven unsafe by the {\em former}, i.e., the
	{\em former} dominates the {\em later}.
	Hence, computing all \forbiddenwalks{} is wasteful and we only focus on inclusion-minimal \forbiddenwalks{}, referred as \emph{minimal \forbiddenwalks{}} which are only dominated by themselves.
	
	With this notion of domination, it suffices to compute the set of minimal \forbiddenwalks{} in the graph to exclude all the unsafe walks.
    Note that at most one minimal \forbiddenwalk{} starts and ends at each \bridge{}, otherwise one would dominate the other.
	Therefore, there are at most $O(|\BrgS|)$ different minimal \forbiddenwalks{} for a \bridgesequence{} $\BrgS$.
	We now describe how to compute these in linear time.
	
	\subparagraph{Algorithm.}
	
	\begin{algorithm}[tbh]
		\caption{\textsc{Minimal \ForbiddenWalks{}}}
		\label{a:forbiddenwalks}
		\KwIn{Graph $G := (V, E)$, with $s$-$t$ \bridgesequence{} $B$ and its components in $\ComB$.
		}
		\KwOut{Set of non-zero length \forbiddenwalks{} that are minimal from their starts.
		}
		\DontPrintSemicolon
		\vspace*{0.2em}
		\BlankLine

        \tcc*[l]{Stage one}
		\ForAll(\tcp*[f]{initialise  \forbiddenwalks{} }){$i \in \{2, \dots, |B|\}$}{
		    \lIf{there exists a \forbiddentrail{} from $b_{i-1}$ to $b_i$}{
		        $end[i] \gets i - 1$
		    }\lElse{
		        $end[i] \gets |B|$\tcp*[f]{signifies empty starting from $i$}
		    }
		}
        \BlankLine

		\lForAll{$i \in \{2, \dots, |B| - 1\}$}{
		    Mark nodes in $\comB_i$ reverse reachable from $\tail{}(b_i)$
		}
        \BlankLine
		\ForAll{$j \in \{2, \dots, |B| - 1\}$ \label{a:forbiddenwalks:cycleloop}}{
			\ForAll(\tcp*[f]{circular \forbiddenwalks{} ending with $b_j$}){$u \in \comB_{j+1}$ \label{a:forbiddenwalks:searchforenode}}{
				\ForAll{$(u, v) \in E: v \in \comB_i, 2 \leq i \leq j$ \label{a:forbiddenwalks:searchfore}}{
					\If{$v$ is marked}{
						$end[i] \gets \min\{end[i], j\}$\;
					}
				}
			}
		}
        \BlankLine

		\tcc*[l]{Stage two}
		$min \gets |B|$\tcp*[r]{minimum end seen so far}
		\ForAll{$i \in \{2, \dots, |B|\}$ in reverse}{
		   \If(\tcp*[f]{ignore dominated \forbiddenwalks{}}){$end[i] < min$}{
		        Add $(i,end[i])$ to \textsc{Sol}\;
		        $min \gets \min\{min, end[i]\}$\tcp*[r]{\forbiddenwalk{} is new leftmost end}
		    }
		Return \textsc{Sol}\;
		}
	\end{algorithm}
	
	Using \bridges{} and \forbiddentrails{} (zero bridge length \forbiddenwalks{}) computed earlier, we now compute the minimal \forbiddenwalks{} of non-zero bridge length in two stages.
	First, we compute the $O(|\BrgS|)$ \forbiddenwalks{} that are minimal with respect to their starts. Then we remove the dominated \forbiddenwalks{} to get the globally minimal \forbiddenwalks{}.
	
	In the first stage, we start by performing backwards traversals from $t$ and the tail of each \bridge{} that stay within the \island{} they started.
	This way, we {\em mark} each node that is reverse reachable from the exit of its \island{}.
	Now, all the \forbiddenwalks{} minimal from their start correspond to {\em backward edges} $e$ of the following form: an edge from $C_j$ to $C_i$ ($i<j$) where $\head{}(e)$ is marked. Intuitively, a minimal \forbiddenwalk{} contains a single backward edge, because if it has multiple backward edges its minimality would be disproven by one of its backwards edges. The marked nodes ensure that \forbiddenwalk{}  is completed by reaching the start (see \Cref{l:backwardsedgesareforbiddenwalks}). Hence, we iterate over all such edges and maintain the dominating \forbiddenwalk{} starting from each $s$-$t$ bridge.
	
	In the second stage, we traverse the \forbiddenwalks{} in reverse order of their starts and remove those that do not end before their successor (in forward order), and hence are dominated by the successor.
    Both of these stages run in linear time, and hence we get the following theorem (see \Cref{a:forbiddenwalks} for the pseudocode).
    
    \begin{theorem}[restate = minforbiddenwalks, name = ]
    Given a graph $G := (V, E)$ with $n$ nodes, $m$ edges and $s,t\in V$, the set of minimal \forbiddenwalks{} for \EWmodel{} can be computed in $O(m + n)$ time.
    \end{theorem}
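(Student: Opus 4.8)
<br>

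The plan is to prove that Algorithm~\ref{a:forbiddenwalks} runs in $O(m+n)$ time and correctly computes the set of minimal \forbiddenwalks{}. Correctness will rest on the two structural lemmas already established, \Cref{l:backwardsedgesareforbiddenwalks} and \Cref{l:maxforbiddenwalkstructure}, so the main task here is to verify that the algorithm faithfully implements the characterisation they provide and that each stage respects the linear time budget. First I would handle the preprocessing: \Cref{t:linearst} gives the \bridgesequence{} and the \islands{} $\ComB$ in $O(m+n)$ time, and the \forbiddentrails{} (the zero-bridge-length \forbiddenwalks{}) are detected in the same bound by the incoming-edge check described after \Cref{t:charesw}. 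The reverse-reachability marking is a collection of backward traversals, one from $\tail{}(b_i)$ confined to $\comB_i$; since the \islands{} partition the reachable part of $G$, the total work is $O(m+n)$.

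Next I would argue correctness of stage one. By \Cref{l:backwardsedgesareforbiddenwalks}, every minimal \forbiddenwalk{} of non-zero bridge length corresponds to exactly one \emph{backward edge} $e=(u,v)$ with $u \in \comB_{j+1}$, $v \in \comB_i$ for some $i \le j$, and $\head{}(e)=v$ marked (i.e.\ $v$ can reverse-reach the exit of $C_i$); conversely each such edge yields a \forbiddenwalk{} from $b_i$ to $b_j$. The algorithm enumerates precisely these edges in the triple loop over $j$, over $u \in \comB_{j+1}$, and over out-edges $(u,v)$ landing in some earlier \island{}, recording for each start index $i$ the smallest end index $j$ seen, via $end[i] \gets \min\{end[i], j\}$. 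This computes, for every \bridge{} that starts some \forbiddenwalk{}, the \forbiddenwalk{} that is minimal among those sharing that start (smallest bridge length). The key observation for linearity is that each edge $(u,v)$ is examined at most once across the whole nested loop—$u$ determines the unique $j$ with $u \in \comB_{j+1}$, so the outer two loops amount to a single pass over all nodes and the innermost loop to a single pass over their out-edges—giving $O(m+n)$ total.

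Stage two removes the remaining dominated \forbiddenwalks{}. A \forbiddenwalk{} $P$ dominates $P'$ exactly when the bridge sequence of $P$ is a substring of that of $P'$; among the candidates surviving stage one (one per start), $(i, end[i])$ is dominated iff some $(i', end[i'])$ with $i'>i$ has $end[i'] \le end[i]$, i.e.\ its interval is contained. Scanning starts in reverse and maintaining the minimum end $min$ seen so far lets me keep $(i,end[i])$ precisely when $end[i] < min$, which is the standard staircase/skyline filter for retaining inclusion-minimal intervals; this is one reverse pass, hence $O(|\BrgS|) \subseteq O(n)$. Combining stages gives the claimed $O(m+n)$ bound, and the retained set is exactly the globally minimal \forbiddenwalks{}, as required by the notion of domination introduced before the algorithm.

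The main obstacle I expect is the linearity argument for stage one rather than its correctness: one must be careful that the condition ``$u \in \comB_{j+1}$'' in the loop header does not hide a quadratic re-scan of nodes across different values of $j$. The clean way to discharge this is to observe that the \islands{} form a partition of the $s$-reachable subgraph, so each node $u$ belongs to a unique $\comB_{j+1}$ and is therefore processed for a single $j$; implemented with a bucketing of nodes by their \island{} index (available from \Cref{t:linearst}), the two outer loops together touch each node once and the inner loop each edge once. A secondary subtlety is ensuring the \min-update over $end[i]$ does not require sorting: since we only ever take minima, unsorted updates suffice, and correctness of the final staircase filter in stage two does not depend on the order in which stage-one minima were recorded. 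Once these bookkeeping points are settled, the theorem follows.
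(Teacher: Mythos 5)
Your proposal reconstructs the paper's argument essentially step for step: the same two-stage algorithm, stage-one correctness via \Cref{l:backwardsedgesareforbiddenwalks} (each minimal non-zero-length \forbiddenwalk{} corresponds to a single backward edge with marked head, and conversely), the same reverse-scan staircase filter to discard dominated \forbiddenwalks{} in stage two, and the same $O(m+n)$ accounting --- your explicit bucketing-by-\island{} argument for the nested loop merely makes precise what the paper leaves implicit in ``iterate over all such edges.'' The only quibble is your appeal to \Cref{l:maxforbiddenwalkstructure}, which concerns the structure of maximal safe sequences (used later for the compact representation) and plays no role in this theorem.
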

	
	\subparagraph{Compact representation of the solution.}
	
	\begin{figure}[t!]
	    \centering
	    \includegraphics[scale=0.9]{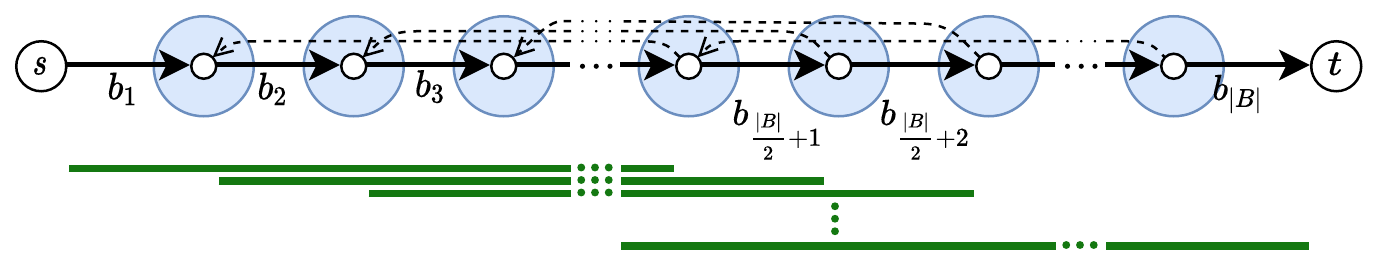}
	    \caption{A graph with maximal safe sequences for \EWmodel{} of total length $\Theta(n^2)$ ($|B|$ is even).}
	    \label{f:n2safe}
	\end{figure}
	
	In contrast to the previous problems, solutions to \EW{} might overlap.
	This allows the solutions' total length to be quadratic in the number of nodes (see \Cref{f:n2safe}).
	Therefore, instead of reporting the solution directly, the algorithm creates a {\em compact representation} from which the complete solution can be reported in time linear to its total length.
	This representation consists of the bridge sequence, and the starts and ends of each maximal safe walk in the bridge sequence.
	
	Now, each such maximal safe sequence begins with the start of previous \forbiddenwalk{} (or the first $s$-$t$ bridge) and ends with the end of current \forbiddenwalk{} (or the last \bridge{}) (see \Cref{l:maxforbiddenwalkstructure}).
	Note that the definition of the start and end for zero bridge length \forbiddenwalks{} (\forbiddentrails) perfectly fits this structure of the solution.
	Hence, the indices of the solution for the {\em compact representation} can be computed by simply iterating over all minimal  \forbiddenwalks{}, requiring $O(|\BrgS|)$ time. 
	
    Resulting, the minimal \forbiddenwalks{} and the compact representation of the solution of size $O(|B|)$ can be computed in $O(m+n)$ time, which can be expanded to get the complete solution in time linear in the size of the solution.
	Thus, we have proven \Cref{t:lineardatastructure} for $X = V \cup E$, $X = V$ and $X = E$.
	
	
	\section{Subset Visibility}
    \label{s:subset-visibility}

    In this section, we discuss $X$-visible variants of our problems. We prove that \XP{}  and \XT{} are in fact NP-hard (\Cref{t:np-hardness}). We then show how  
    to solve \XW{} as an extension of \EW{}.
    
	\subparagraph{NP-hardness.} 
	We prove \XP{}  and \XT{} to be NP-hard by reduction from the following problem, proven NP-complete in~\cite[Theorem 2]{Fortune1980}.
	
	
	\begin{problem}[\textsc{Detour}]
		\label{p:detour}
		Given a graph $G$ and pairwise distinct nodes $u, v, w$ of $G$, decide if there is a $u$-$v$ path in $G$ that contains $w$.
	\end{problem}
	
    Observe that with the following reduction, a certificate for the unsafety of a sequence is also a certificate for a detour.
	Therefore, as \detour{} is in NP, our NP-hard problems are in co-NP. 
	Formally, we describe it as follows.

	\nphardness*
	\begin{proof}
	Let $G,u,v,w$ be an instance of \textsc{Detour}. In order to address the \ETmodel{} problem in the same way as the \EPmodel{} problem, we transform $G$ into the graph $G'$ by expanding each node.
	For a node $x$ we denote its internal edge as $e_x$.
	
	For the rest of the proof we set $s = \head{}(e_u)$ and $t = \tail{}(e_v)$. Every node of $G'$ has either exactly one incoming edge or exactly one outgoing edge (i.e., some internal edge $e_x$). As such, any $s$-$t$ walk visiting a node twice also visits some edge twice (i.e., some internal edge $e_x$ incident to this repeated node of $G'$). Thus, all $s$-$t$ trails of $G'$ are $s$-$t$ paths.
	
	When we restrict $X \subseteq E$, we set $X = \{e_u,e_w,e_v\}$. We have that $(e_u,e_v)$ is safe under the $X$-visible $s$-$t$ paths model in $G'$ if and only if $G,u,v,w$ is a no-instance for \detour. Since all $s$-$t$ trails of $G'$ are $s$-$t$ paths, the same holds also for the $s$-$t$ trails model.
	
    When we restrict $X \subseteq V$, we set $X = \{s,\tail{}(e_w),t\}$, and analogously have that $(s,t)$ is safe under the $S$-visible $s$-$t$ paths model in $G'$, or under the $s$-$t$ trails model in $G'$, respectively, if and only if $G,u,v,w$ is a no-instance for \detour.
	\end{proof}

	When considering walks, both nodes and edges can be used without limits, and therefore the reduction from \textsc{Detour} does not work.
	Instead, our algorithm for \EW{} can be extended to solve \XW{}.
	
	\subparagraph{Characterisation for \textnormal{\XWmodel{}}.} In \XW{} an additional subset $X \subseteq V \cup E$ of visible nodes and edges  is given.
	By {\em expansion} of each visible node $v$ and making its internal edge visible instead of $v$, we reduce the problem to $X \subseteq E$.
	Similar to previous problems, the solutions to \XW{} are then substrings of the \emph{\xbridgesequence{}}, which is the $X$-subsequence of the \bridgesequence{}.
	Such a substring is safe if it contains no \emph{\xforbiddenwalk{}}, which is a \forbiddenwalk{} with an $X$-edge.
	With these definitions we get the following characterisation.
	
	\begin{theorem}[restate = charfw, name = Safety for \XWmodel{}]
	    \label{t:charfw}
		A substring of the \xbridgesequence{} is safe under the \XWmodel{} model, if and only if it has no \xforbiddenwalk{}.
    \end{theorem}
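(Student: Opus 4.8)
The plan is to mirror the proof of the \EWmodel{} characterisation (\Cref{t:charew}), replacing the edge sequence $W_E$ of an $s$-$t$ walk $W$ by its $X$-subsequence $W_X$, and replacing ``\forbiddenwalk{}'' by ``\xforbiddenwalk{}''. After the node-expansion reduction described above we may assume $X \subseteq E$, so that $W_X$ is a subsequence of edges. A substring $L$ of the \xbridgesequence{} corresponds to a unique substring $\hat{L} := (b_p, \dots, b_q)$ of the \bridgesequence{} whose first and last edges coincide with those of $L$; by construction $L$ is exactly the $X$-subsequence of $\hat{L}$, with $\ell_1 := b_p$ and $\ell_r := b_q$ its first and last edges, and every edge of $L$ is a visible \bridge{}. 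The guiding observation is that inserting a \forbiddenwalk{} into an $s$-$t$ walk alters $W_X$ if and only if the inserted walk carries an $X$-edge, which is precisely the defining feature of an \xforbiddenwalk{}.

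For the ($\Rightarrow$) direction, suppose $\hat{L}$ admits an \xforbiddenwalk{} $P$ carrying some $X$-edge $x$. By \Cref{d:forbiddenwalk}, $P$ does not use $\ell_1 = b_p$ nor $\ell_r = b_q$. Starting from any $s$-$t$ walk $W$, I would insert $P$ at every occurrence of $\hat{L}$, exactly as in the proof of \Cref{t:charew}, to obtain an $s$-$t$ walk $W'$. Since $P$ avoids both $\ell_1$ and $\ell_r$, every inserted copy of $x$ lands strictly between an occurrence of $\ell_1$ and the corresponding subsequent occurrence of $\ell_r$ in $W'_X$. Hence no occurrence of $L$ survives as a substring of $W'_X$, witnessing that $L$ is not safe.

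For the ($\Leftarrow$) direction, suppose $\hat{L}$ has no \xforbiddenwalk{}. Let $W$ be an arbitrary $s$-$t$ walk and consider the window $W'_E$ of its edge sequence running from the last occurrence of $b_p$ to the first subsequent occurrence of $b_q$, as in \Cref{t:charew}. The inductive argument there shows that between any two consecutive \bridges{} $b_j, b_{j+1}$ of $\hat{L}$ inside this window, either $b_{j+1}$ immediately follows $b_j$, or the intervening subwalk is a $\head{}(b_j)$-$\tail{}(b_{j+1})$ walk avoiding $b_p$ and $b_q$, i.e. a \forbiddenwalk{}. Since $\hat{L}$ has no \xforbiddenwalk{}, each such intervening subwalk is free of $X$-edges and therefore contributes nothing to $W_X$. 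Consequently the $X$-subsequence of this window equals $L$ exactly, and since the $X$-subsequence of a contiguous window of $W$ is a contiguous substring of $W_X$, we conclude that $L$ is a substring of $W_X$. As $W$ was arbitrary, $L$ is safe.

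I expect the main obstacle to be the visibility bookkeeping rather than the scaffold. The delicate point is in the ($\Leftarrow$) direction: one must argue that an intervening subwalk carrying an $X$-edge \emph{is itself} an \xforbiddenwalk{}, \emph{without} reducing it to a simple path. This is fortunate and in fact essential, because routing a simple path through a fixed interior edge is exactly the (NP-hard) \detour{} obstruction (\Cref{p:detour}) that the walks model sidesteps; the reachability-based description of \forbiddenwalks{} (\Cref{l:backwardsedgesareforbiddenwalks}) is what guarantees the walk itself qualifies. The remaining care is to verify that each inserted $X$-edge in the ($\Rightarrow$) direction always falls strictly interior to the image of $L$ in $W_X$, that taking a contiguous window of $W$ commutes with extracting the $X$-subsequence, and that inserting $P$ at every occurrence destroys all copies of $L$; these are the only places where the argument genuinely departs from the $G$-visible case.
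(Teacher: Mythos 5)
Your proposal is correct and takes essentially the same route as the paper's proof: after the node-expansion reduction to $X \subseteq E$, the ($\Rightarrow$) direction inserts the \xforbiddenwalk{} into every occurrence of $L$ in the $X$-subsequence, and the ($\Leftarrow$) direction uses the same window of $W$ from the last occurrence of the first bridge to the first subsequent occurrence of the last, with the induction from \Cref{t:charew} showing every intervening subwalk is free of $X$-edges, so that the window's $X$-subsequence is exactly $L$. Your closing remark is also apt: the paper's \Cref{d:forbiddenwalk} nominally requires a \emph{path}, but its proof (like yours) treats the intervening subwalk itself as the obstruction without extracting a simple path through the $X$-edge, which is exactly what lets the walks model avoid the \detour{} obstruction.
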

    \begin{proof}
	
		$(\Rightarrow)$ Let $L$ be a substring of the \xbridgesequence{} that is safe under the \XWmodel{} model where $X \subseteq E$.
		If $L$ has an \xforbiddenwalk{}, then from an $s$-$t$ walk $W$ an $s$-$t$ walk $W'$ can be constructed by inserting that \xforbiddenwalk{} into every occurrence of $L$ in the $F$-subsequence of $W$.
		But then $L$ is not a substring of the $F$-subsequence of $W'$.
		
		$(\Leftarrow)$ Let $L := (e_1, \dots, e_{|L|})$ be a substring of the \xbridgesequence{} that has no \xforbiddenwalk{}.
		Let $W$ be an $s$-$t$ walk and let $W_X$ be its $X$-subsequence.
		Since $L$ is a substring of the \xbridgesequence{}, $W_X$ contains a substring $W'_X$ that starts from the last occurrence of $e_1$ and ends in the first occurrence of $e_{|L|}$ after that, and $L$ is a subsequence of $W'_X$.
		We prove that $L$ is a prefix of $W'_X$ by induction.
		By definition, $W'_X$ starts with $e_1$.
		For the inductive step, suppose that $e_1, \dots, e_i$ is a prefix of $W'_X$.
		By definition of $W'$, none of its $\head{}(e_i)$-$\tail{}(e_{i+1})$-subwalks contain $e_1$ or $e_{|L|}$.
		Therefore, since $L$ does not have any \fforbiddenwalk{}, no $e_i$-$e_{i+1}$-subwalk of $W'$ contains an $X$-edge.
		Thus $e_i$ is immediately followed by $e_{i+1}$ in $W'_X$.
	\end{proof}
	
	From this we can derive an algorithm similar to that for \EW{}.
	After computing the \bridgesequence{} and components, we remove those \bridges{} that are not in $X$ and merge the corresponding \islands{}.
	Then \xforbiddenwalks{} of non-zero bridge length can be computed as before, since their \bridge{} is their $X$-edge.
	\xforbiddenwalks{} of bridge length zero can be found by computing the reverse reachability from $t$ as in \Cref{s:min-forb-walks}, and then iterating over each $X$-edge of a \island{} and checking whether its head is marked.
	The rest of the algorithm remains unchanged.
	This algorithm runs in the same time constraints as that for \EW{}.
	
	\section{Extension to Multigraphs}
	\label{s:extensions}
	
    Most of the results in this paper can be applied to graphs with multiedges without much change.
	The \bridge{} algorithm naturally extends to multiedges.
	Since including multiedges is a generalisation, the NP-hardness results remain valid.
	Even for the $G$-visible problems and the \XWmodel{} problem no change is required.
	In \VPmodel{} and \VWmodel{}, where the multiplicity of edges is not relevant, the parallel edges can simply be merged.
	Only in \VTmodel{} extending to multigraphs (denoted by \emph{\multiVTmodel{}}) is non-trivial, as merging parallel edges changes the set of candidate solutions.
	See for example \Cref{f:multigraphs}, where adding a multiedge in a safe sequence of nodes creates a \forbiddenstructure{}.
	
    \begin{figure}[tbh]
	    \centering
	    \includegraphics[width=.7\textwidth]{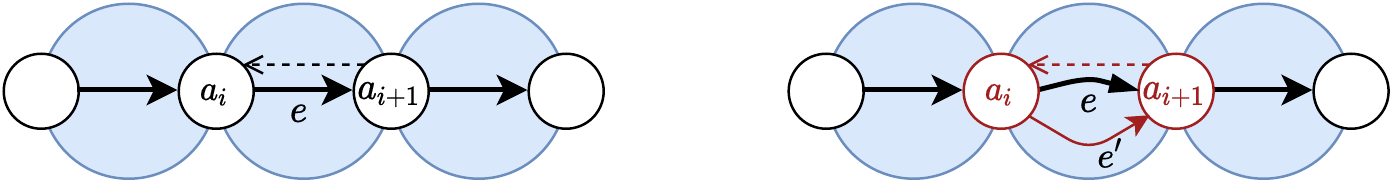}
	    \caption{Example for creating a \forbiddentrail{} by adding a multiedge. Originally, the dashed edge cannot be used as a \forbiddentrail{} without repeating the edge $e$. On adding the parallel edge $e'$, the red cycle becomes a \forbiddentrail{} repeating $a_i$ and $a_{i+1}$ without repeating an edge.
	    The blue regions mark \articulationislands{} and each node is an \articulationpoint{}.}
	    \label{f:multigraphs} 
	\end{figure}
	
	To describe how to solve \textsc{MaxSafe} \multiVTmodel{}, we now assume that $G' := (V, E)$ is a multigraph, $G$ is $G'$ with all parallel edges merged and $s, t \in V$ are given as before.
	First of all, observe that for a sequence of nodes to be safe in \multiVTmodel{}, it needs to be a substring of the \articulationsequence{}.
	Furthermore, we have a trivial necessary condition for safety, which is similar to the adjacency condition in the $G$-visible cases.
	A sequence of two nodes can only be safe if there is no path with more than one edge that connects the first to the second.
	This is equivalent to requiring that the two nodes are connected by an \bridge{} in $G$.
	From here on, we consider a sequence of nodes $L$ that fulfils both of these conditions, where the second is fulfilled by each consecutive pair of nodes.
	
	Let $P$ be the path spelled by $L$, such that the edges of $P$ are a substring of the \bridgesequence{}.
	Such a $P$ is safe in \EPmodel{} in $G$, and hence we can state that: for $L$ to be safe also in \multiVTmodel{} in $G'$, it needs to spell out a path in $G$ that is safe in \EPmodel{}.
	But we can make even more detailed relations to the $G$-visible cases.
	
	\newcommand{\T}{\ensuremath{\mathcal{T}}}
	\newcommand{\W}{\ensuremath{\mathcal{W}}}
	
	For that, we denote the set of $s$-$t$ trails in $G$ as $\T$, the set of $s$-$t$ trails in $G'$ as $\T'$, and the set of $s$-$t$ walks in $G$ as $\W$.
	We furthermore denote with $V(\T)$, $V(\T')$ and $V(\W)$ the sets of node sequences associated with these sets of walks, respectively.
	That is, $V(\T')$ is the candidate set of \multiVTmodel{} in $G'$ and $V(\T)$ and $V(\W)$ are the candidate sets of \VTmodel{} and \VWmodel{} in $G$.
	Observe that we have the following relations: 
	\begin{align}
	V(\T) \subseteq V(\T') \subseteq V(\W)    
	\label{eq:multi-3-inclusions}
	\end{align}
	
	Therefore, if $L$ is safe for $V(\T')$, each element of $V(\T')$ contains $L$, and therefore each element of $V(\T)$ contains $L$ as well, which makes $L$ safe for $V(\T)$.
	The same argument makes $L$ safe for $V(\T')$ if it is safe for $V(\W)$.
	With these relations, we can detail our statement from above by stating that:
	\begin{enumerate}[(a)]
	    \item For $L$ to be safe in \multiVTmodel{} in $G'$, it needs to be safe in \VTmodel{} in $G$ (because of the first inclusion in \Cref{eq:multi-3-inclusions}). This is equivalent to $L$ spelling out a path in $G$ that is safe in \ETmodel{}.
	    \item If $L$ is safe in \VWmodel{} in $G$ (which is equivalent to $L$ spelling out a path in $G$ that is safe in \EWmodel{}), then $L$ is safe in \multiVTmodel{} in $G'$ (because of the second inclusion in \Cref{eq:multi-3-inclusions}).
	\end{enumerate}
	
	So, to compute the solutions of \textsc{MaxSafe} \multiVTmodel{} that are not single nodes, we can start from the \bridgesequence{} in $G$ and proceed similarly to the $G$-visible cases.
	From statement (A), we know that \forbiddentrails{} in $G$ are breaking, while no other structure than a \forbiddenwalk{} in $G$ can be breaking because of statement (B).
	Therefore, the question that remains is: what \forbiddenwalks{} in $G$ of non-zero bridge length are actually breaking?
    To answer this question, observe that a \forbiddenwalk{} in $G$ that contains an edge that is an \bridge{} in $G'$ cannot be used by an $s$-$t$ trail in $G'$ without repeating that edge.
    All other \forbiddenwalks{} can be defined as follows.
    
    \begin{definition}[restate = trailmultibreaker, name = Trail Multi-Breaker]
        Let $G'$ be a multigraph and $G$ be the graph obtained by merging the parallel edges of $G'$.
        We say that $Q$ is a \emph{trail multi-breaker} in $G'$ if $Q$ is a \forbiddenwalk{} in $G$ that contains no \bridge{} of $G'$.
    \end{definition}
    
    We can prove that these are exactly the \forbiddenstructure{}s in \multiVTmodel{}.
    Consider a trail multi-breaker $Q$ that is a \forbiddenwalk{} of non-zero bridge length for $P$.
    Let $P'$ be the set of all parallel edges of $P$ in $G'$ (in addition to the edges of $P$).
	Let $R'$ be an $s$-$t$ path in $G'$.
	It holds that $Q$ and $R'$ can only share edges in $P'$, since otherwise $Q$ would not be a \forbiddenwalk{} for $P$.
	And since $Q$ contains no edge that is an \bridge{} in $G'$, all shared edges are not \bridges{}.
	Therefore, since merging all parallel edges in $P'$ produces the edges of $P$, which are \bridges{}, all edges that $Q$ and $R'$ can share have a parallel edge.
	Replacing the shared edges with the parallel edges in $R'$ produces an $s$-$t$ path in $G'$ in which $Q$ can be inserted.
	Therefore, $Q$ can be used by an $s$-$t$ trail in $G'$.
	Resulting, we get the following theorem.
    
    \begin{theorem}[restate = charmvt, name = ]
        A substring $L$ of the \articulationsequence{} is safe under the \multiVTmodel{} model if and only if it has no trail multi-breaker.
    \end{theorem}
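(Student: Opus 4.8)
The plan is to prove the equivalence by showing that the \forbiddenstructure{}s for \multiVTmodel{} are exactly the trail multi-breakers, reusing the two inclusions of \Cref{eq:multi-3-inclusions} together with the construction established just before the statement. Throughout, let $P$ be the path in $G$ spelled by $L$, whose edges form a substring of the \bridgesequence{}; recall that $P$ is safe in \EPmodel{} in $G$, so the node sequence of every $s$-$t$ walk of $G'$ already contains $L$ as a subsequence, and only the \emph{substring} property can fail.

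For the forward direction I would argue the contrapositive: every trail multi-breaker makes $L$ unsafe. If the trail multi-breaker has zero bridge length it is a \forbiddentrail{} of $G$, so $P$ is unsafe in \ETmodel{}; equivalently $L$ fails to be a substring of some element of $V(\T)$, and since $V(\T) \subseteq V(\T')$ by \Cref{eq:multi-3-inclusions}, the same witness shows $L$ unsafe in \multiVTmodel{}. If it has non-zero bridge length, the construction preceding the statement exhibits an $s$-$t$ trail of $G'$ into which the trail multi-breaker $Q$ is inserted, after rerouting the edges shared with an $s$-$t$ path $R'$ through their parallel copies; since $Q$ is a \forbiddenwalk{} for $P$, between the two \bridges{} it connects it spells neither a prefix nor a suffix of $L$, so the resulting $s$-$t$ trail does not contain $L$. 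In either case $L$ is unsafe, which gives safety $\Rightarrow$ absence of a trail multi-breaker.

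For the reverse direction I would again use the contrapositive and extract a trail multi-breaker from any witness of unsafety. Suppose some $s$-$t$ trail $T'$ of $G'$ does not contain $L$, and let $T$ be the $s$-$t$ walk of $G$ obtained by merging parallel edges; since $T$ has the same node sequence it does not contain $L$ either, so $P$ is unsafe in \EWmodel{}. By the reasoning of \Cref{t:charew}, $T$ therefore realises a \forbiddenwalk{} $Q$ of $P$ in $G$, corresponding to a subwalk $Q'$ of the trail $T'$. The crux is to show $Q$ contains no \bridge{} of $G'$: as $T'$ is a trail, $Q'$ uses each edge of $G'$ at most once, while any \bridge{} of $G'$ lies on every $s$-$t$ path of $G'$ and is hence traversed by $T'$ between $s$ and $t$ independently of the detour $Q'$; so $Q'$ cannot use such an edge without forcing $T'$ to repeat it. Thus $Q$ uses no \bridge{} of $G'$ and is a trail multi-breaker, contradicting the assumption and proving absence $\Rightarrow$ safety.

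The main obstacle I anticipate is this reverse direction: one must pin down precisely which \forbiddenwalk{} of $P$ a non-containing trail realises and verify rigorously that it avoids all \bridges{} of $G'$. This hinges on carefully separating the \bridges{} of $G$ that are genuine single edges — hence also \bridges{} of $G'$, usable only in the forward direction by a trail — from those arising by merging parallel edges, which a detour may traverse through a copy distinct from the one used by the main trail. A secondary point of care in the forward direction is ensuring the inserted detour destroys \emph{every} occurrence of $L$ in the constructed trail rather than just one, so that the trail genuinely witnesses unsafety; here the fact that the edges of $P$ are \bridges{} of $G$ keeps the number of relevant occurrences under control.
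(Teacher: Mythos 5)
Your proof is correct and follows essentially the same route as the paper: safety is sandwiched via the inclusions $V(\mathcal{T}) \subseteq V(\mathcal{T}') \subseteq V(\mathcal{W})$, trail multi-breakers are shown to be \forbidden{} by rerouting the edges an $s$-$t$ path shares with the breaker through their parallel copies before inserting it, and the converse rests on the observation that a \forbiddenwalk{} containing an \bridge{} of $G'$ forces any $s$-$t$ trail of $G'$ using it to repeat that edge. The ``crux'' you flag in the reverse direction is precisely the paper's own (equally informally stated) observation, so your two-contrapositive organisation differs only cosmetically from the paper's characterisation of the \forbiddenstructure{}s.
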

	
	Hence, \textsc{MaxSafe} \multiVTmodel{} can be solved by computing \forbiddenwalks{} in $G$, and filtering to keep only those whose bridge sequence contains no edge that is an \bridge{} in $G'$.
	The solution can then be reported as for \EW{}, and the whole algorithm runs in the same time constraints as \EW{}.

	\section{Conclusions}
	\label{s:conclusions}
	
	On the theoretical side, we considered a natural generalisation of $s$-$t$ bridges, with the notion of safety. We considered the standard solution sets of $s$-$t$ paths, trails and walks, and natural extensions thereof. We fully characterised the complexity of all problems, obtaining a clear trichotomy between linearly solvable problems, problems that allow to compute a compact representation of the solution in linear time, and NP-hard problems.
	
	On the practical side, our problems have potential applications in the genome assembly problem. Observe that, in a sense, our solution sets generalise the set $\mathcal{W}_0$ of circular edge-covering walks, as follows. Take any edge $e$, and consider the set $\mathcal{W}_e$ of all walks from the head of $e$ to the tail of $e$. It holds that any safe walk w.r.t.~$\mathcal{W}_e$ is also safe w.r.t.~$\mathcal{W}_0$.
	
	But in contrast to circular models, our computational formulations can also be applied to non-circular genomes.
	Moreover, they can be applied to scenarios where more than one genome string (i.e.~more chromosomes) has to assembled from a single genome graph, such as when sequencing and assembling a human genome. 
    Furthermore, since some parts of the graph may correspond to errors from the genome sequencing process, not all edges should be covered (i.e., explained) by the genome assembly solution, which motivates removing the edge-covering assumption.
    Moreover, uncertain or complex parts of the graph can be handled by the subset visibility models.
    For example, suspected errors can also be marked as invisible, hiding them in the solutions. Moreover, diploid genomes such as human contain a maternal and a paternal copy of the chromosomes.
    A position where the two copies differ creates a branch in the graph, and this might undesirably break some safe solutions.
    This motivates marking such areas as invisible, to still determine if their flanking regions are consecutive in all solutions to this model.
    As such, one can potentially obtain long safe sequences skipping over invisible parts.
    Finally, all algorithms given here are much simpler than the ones of \cite{DBLP:journals/talg/CairoMART19,Cairo2020pre} (especially when using the simplified $s$-$t$ bridge algorithm~\cite{CairoKRSTZ20}), and thus potentially more suitable for practical applications.
	
	
	
	\bibliography{references}
	
	\newpage
	\appendix
	
\end{document}